\title{Bidding Games with Charging} 
\author{Guy Avni}{University of Haifa, Israel \and \url{https://sites.google.com/view/gavni} }{gavni@cs.haifa.ac.il}{https://orcid.org/0000-0002-1825-0097}{}
\author{Ehsan Kafshdar Goharshady}{Institute of Science and Technology Austria (ISTA), Austria \and \url{https://ehsan.goharshady.com/} }{ehsan.goharshady@ist.ac.at}{https://orcid.org/0000-0002-8595-0587}{}
\author{Thomas A.\ Henzinger}{Institute of Science and Technology Austria (ISTA), Austria \and \url{https://pub.ista.ac.at/~tah/} }{tah@ist.ac.at}{https://orcid.org/0000-0002-2985-7724}{}
\author{Kaushik Mallik}{Institute of Science and Technology Austria (ISTA), Austria \and \url{https://kmallik.github.io/} }{kaushik.mallik@ist.ac.at}{https://orcid.org/0000-0001-9864-7475}{}
\authorrunning{Avni et al.} 
\keywords{Bidding games on graphs, $\omega$-regular objectives} 
\newcommand{\tup}[1]{\left\langle #1\right\rangle}
\newcommand{\set}[1]{\left\lbrace #1\right\rbrace}
\newcommand{\clamp}[1]{\mathtt{clamp}_{[0,1]}\left(#1\right)}
\newcommand{\compl}[1]{\overline{#1}}
\providecommand{\G}{}
\renewcommand{\G}{\mathcal{G}}
\newcommand{\V}{V}
\newcommand{\E}{E}
\newcommand{\RT}{R_2}
\newcommand{\RO}{R_1}
\newcommand{\RI}{R_i}
\newcommand{\BO}{B_1}
\newcommand{\BT}{B_2}
\newcommand{\bo}{b_1}
\newcommand{\bt}{b_2}
\newcommand{\spec}{\varphi}
\newcommand{\PO}{Player~1\xspace}
\newcommand{\N}{\mathcal{S}}
\newcommand{\PT}{Player~2\xspace}
\newcommand{\strat}{\pi}
\newcommand{\stratI}{\pi_i}
\newcommand{\play}{\mathsf{play}}
\newcommand{\Th}{\mathit{Th}}
\newcommand{\ThO}{\Th_1}
\newcommand{\ThT}{\Th_2}
\newcommand{\bThO}{g^*_1}
\newcommand{\bThT}{g^*_2}
\newcommand{\bthO}{g_1}
\newcommand{\bthT}{g_2}
\newcommand{\upd}{\mathit{Av}}
\newcommand{\updO}{\upd_1}
\newcommand{\updT}{\upd_2}
\newcommand{\updI}{\upd_i}
\newcommand{\THRESH}{\mathrm{THRESH}}
\mathchardef\mhyphen="2D 
\newcommand{\safe}{\mathrm{Safe}}
\newcommand{\reach}{\mathrm{Reach}}
\newcommand{\buchi}{\mathrm{B\ddot{u}chi}}
\newcommand{\cobuchi}{\mathrm{Co \mhyphen B\ddot{u}chi}}
\newcommand{\frugalreach}{\mathrm{FrugalReach}}
\newcommand{\rabin}{\mathrm{Rabin}}
\newcommand{\streett}{\mathrm{Streett}}
\newcommand{\f}{\mathsf{fr}}
\newcommand{\stam}[1]{}
\newcommand{\zug}[1]{\langle #1 \rangle}
\renewcommand{\set}[1]{\{ #1  \}}
\newcommand{\PLi}{Player~$i$\xspace}
\newcommand{\Real}{\mathbb{R}}
\newcommand{\Nat}{\mathbb{N}}
\renewcommand{\qed}{\hfill \ensuremath{\Box}}
\begin{document}

\maketitle

\begin{abstract}
\emph{Graph games} lie at the algorithmic core of many automated design problems in computer science.
These are games usually played between two players on a given graph, where the players keep moving a token along the edges according to pre-determined rules (turn-based, concurrent, etc.), and the winner is decided based on the infinite path (aka \emph{play}) traversed by the token from a given initial position. 
In bidding games, the players initially get some monetary budgets which they need to use to bid for the privilege of moving the token  at each step.
Each round of bidding affects the players' available budgets, which is the only form of update that the budgets experience.
We introduce bidding games {\em with charging} where the players can additionally improve their budgets during the game by collecting vertex-dependent monetary rewards, aka the ``charges.''
Unlike traditional bidding games (where all charges are zero), bidding games with charging allow non-trivial recurrent behaviors.
For example, a reachability objective may require multiple detours to vertices with high charges to earn additional budget.
We show that, nonetheless, the central property of traditional bidding games generalizes to bidding games with charging: For each vertex there exists a \emph{threshold} ratio, which is the necessary and sufficient fraction of the total budget for winning the game from that vertex.
While the thresholds of traditional bidding games correspond to unique fixed points of linear systems of equations, in games with charging, these fixed points are no longer unique. 
This significantly complicates the proof of existence and the algorithmic computation of thresholds for infinite-duration objectives.
We also provide the lower complexity bounds for computing thresholds for Rabin and Streett objectives, which are the first known lower bounds in any form of bidding games (with or without charging), and we solve the following \emph{repair problem} for safety and reachability games that have unsatisfiable objectives:
Can we distribute a given amount of charge to the players in a way such that the objective can be satisfied?
\end{abstract}

\section{Introduction}
Two-player {\em graph games} have deep connections to foundations of mathematical logic \cite{Rab69}, and constitute a fundamental model of computations with applications in {\em reactive synthesis}~\cite{PR89} and multi-agent systems~\cite{AHK02}.
A graph game is played on a graph, called the \emph{arena}, as follows. A token is placed on an initial vertex and the two players move the token throughout the arena to produce an infinite path, called a {\em play}.
The winner is determined based on whether the play fulfills a given temporal objective (or specification). Traditionally, graph games are {\em turn-based}, where the players move the token in alternate turns. 
{\em Bidding games} are graph games where who moves the token at each step is determined by an auction (a {\em bidding}). Concretely, both players are allocated initial budgets, and in each turn, they concurrently place bids from their available budgets, the highest bidder moves the token, and pays his bid according to one of the following pre-determined mechanisms. 
In {\em Richman} bidding, the bid is paid to the lower bidder, in {\em poorman} bidding, the bid is paid to an imaginary ``bank'' and the money is lost, and in {\em taxman} bidding, a fixed fraction of the bid is paid to the bank (the ``tax'') and the rest goes to the lower bidder.
The outcome of the game is an infinite play and, as usual, the winner is determined based on whether the play fulfills a given objective.

Bidding games model strategic decision-making problems where resources need to be invested dynamically towards the fulfillment of an objective. 
For example, a taxi driver needs to decide how to ``invest'' his gas supply in order to collect as many passengers as possible, internet advertisers need to invest their advertising budgets in ongoing auctions for advertising slots with the goal of maximizing visibility~\cite{AHI18}, or a coach in an NBA tournament needs to decide his roster for each game while ``investing'' his players' limited energy with the goal of winning the tournament~\cite{AIT20}. 
While in all these scenarios the investment resources can be ``charged,'' e.g., by visiting a gas station, by adding funds, or by allowing the players to rest, respectively, charging budgets cannot be modeled in traditional bidding games. 

We study, for the first time, \emph{bidding games with charging}, where the players can increase their available budgets by collecting vertex-dependent charges.
Every vertex $v$ in the arena is labeled with a pair of non-negative rational numbers denoted $R_1(v) $ and $ R_2(v)$.  
Suppose the game enters a vertex $v$, where for $i \in \set{1,2}$, \PLi's budget is $B_i$ with $B_1 + B_2 = 1$. 
First, the budgets are charged to $B'_1 = B_1 +R_1(v)$ and $B'_2 = B_2 + R_2(v)$. Second, we normalize the sum of budgets to $1$ by defining $B''_1 = B'_1/(B'_1 + B'_2)$ and $B''_2 = B'_2/(B'_1 + B'_2)$. Finally, the players bid from their new available budgets $B''_1$ and $B''_2$, and the bids are resolved using any of the traditional mechanisms. Note that traditional bidding games are a special case of bidding games with charging in which all charges are $0$. 
The normalization step plays an important role and will be discussed in Ex.~\ref{ex:normalization}.

\begin{figure}
	\centering
	\tikzset{every state/.style={minimum size=0pt}}
	\begin{subfigure}[b]{0.45\textwidth}
		\centering
		\begin{tikzpicture}[node distance=0.6cm]
			\node[state,label={below:$\begin{bmatrix}2\\ 0\end{bmatrix}$},label={above left:{\color{blue}$\mathtt{0}$}}]	(a)	at	(0,0)		{$a$};
			\node[state,label={above:{\color{blue}$\mathtt{\frac{1}{4}}$}}]	(b)	[right=of a]	{$b$};
			\node[state,label={above:{\color{blue}$\mathtt{\frac{1}{2}}$}}]	(c)	[right=of b]	{$c$};
			\node[state,accepting,label={above left:{\color{blue}$\mathtt{0}$}}]	(d)	[above right=of c]	{$d$};
			\node[state,label={left:{\color{blue}$\mathtt{1}$}}]	(e)	[below right=of c]	{$e$};
			
			\path[->]	(a)	edge[bend left]	(b)
							edge[loop above]	()
						(b)	edge[bend left]	(a)
							edge			(c)
						(c)	edge			(d)
							edge			(e)
						(d)	edge[loop above]()
						(e)	edge[loop above]	();
						
		\end{tikzpicture}
		\caption{Strategies may depend on the available budget.}
		\label{fig:examples:a}
	\end{subfigure}
	\quad
	\begin{subfigure}[b]{0.45\textwidth}
		\centering
		\begin{tikzpicture}[node distance=0.6cm]
			\node[state,label={left:$\begin{bmatrix}0\\ 6\end{bmatrix}$},label={right:{\color{blue}$\mathtt{1}$}}]	(a)	at	(0,0)		{$a$};
			\node[state,accepting,label={right:{\color{blue}$\mathtt{0}$}}]	(t)	[above right=of a]	{$t$};
			\node[state,label={left:$\begin{bmatrix}0.25\\ 0\end{bmatrix}$},label={right:{\color{blue}$\mathtt{\frac{3}{8}}$}}]	(b)	[above left=of t]	{$b$};

			\path[->]	(a)	edge[bend left]	(b)
							edge	(t)
						(b)	edge[bend left]	(a)
							edge	(t);
						
		\end{tikzpicture}
		\caption{Nontrivial solution for safety objective of \PT when the unsafe vertex (which is $t$) is reachable from every other vertex.}
		\label{fig:examples:c}
	\end{subfigure}
	\caption{Examples to demonstrate the distinctive features of bidding games with charging, compared to traditional bidding games (without charging).
	The double circled vertices are the ones that \PO wants to reach (reachability objective), or, dually, the ones that \PT wants to avoid (safety objective).
	When a vertex $v$ has nonzero reward for at least one of the players, the rewards are shown next to $v$ in the vector notation 
	$\begin{bmatrix}
		\RO(v)\\
		\RT(v)	
	\end{bmatrix}$.
	The threshold budget of \PO for each vertex is shown in {\color{blue}\texttt{blue}} next to the vertex.}
	\label{fig:examples}
\end{figure}
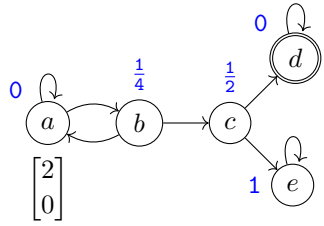
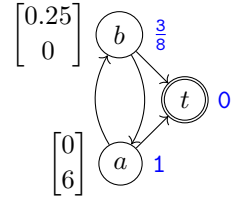

\begin{example}
\label{ex:bud-agnostic}
We illustrate the model and show a distinctive feature that is not present in traditional bidding games. Consider the bidding game in Fig.~\ref{fig:examples:a}, where the objective of \PO, the reachability player, is to reach $d$ and the objective of \PT, the safety player, is to prevent this. Consider Richman bidding. 
We show that from vertex $b$, \PO can win with a budget of $B_1 = \frac{1}{4} + \epsilon \leq 1$ for every $\epsilon > 0$. 
\PO bids $\frac{1}{4}$ at $b$. We consider two cases. 
\emph{First}, \PT wins the bidding and proceeds to $c$. She pays \PO at least $\frac{1}{4}$, and \PO's budget at $c$ becomes at least $\frac{1}{2}+\epsilon$.
\PO can now win the bidding by bidding all of his budget (recall that the sum of budgets is $1$), and can proceed to $d$ to win the game. 
\emph{Second}, suppose that \PT loses the bidding at $b$. \PO proceeds to $a$ with a budget of $\epsilon$. We charge his budget to $B'_1 = 2+\epsilon$ and after re-normalizing his new budget becomes $B''_1 = \frac{B'_1}{3} > \frac{2}{3}$. 
\PO increases his budget by forcing the game to stay in $a$ for three consecutive turns: He first bids $\frac{1}{3}$, and his budget exceeds $(\frac{1}{3}+2)/3 = \frac{7}{9}$, then he bids $\frac{2}{9}$ and $\frac{4}{27}$ in the following two turns, after which his budget exceeds $\frac{73}{81} > \frac{7}{8}$. Since every budget greater than $\frac{7}{8}$ suffices to guarantee winning three consecutive biddings, he can now force the game to reach $d$, resulting in a win. 

We point out a distinction from traditional bidding games (without charging). 
In games without charging, it is known that if a player wins, he can win using a  {\em budget agnostic} winning strategy: For every vertex $v$, there is a successor $u$ such that upon winning the bidding at $v$, the strategy proceeds to $u$ regardless of the current available budget.\footnote{We refrain from calling the strategy {\em memoryless} since it might bid differently in successive visits to $v$.} 
However, it is not hard to see that there is no winning budget-agnostic strategy in the game above; indeed, in order to win, \PO must eventually go right from $b$, but when his budget is $0.25 < B_1 \leq 0.75$, he needs to go left and going right will make him lose. \qed
\end{example}

Another distinctive feature of bidding games with charging is that safety games have non-trivial solutions, while in traditional bidding games, the only way to ensure safety is by reaching a vertex with no path to the unsafe vertices~\cite{LLPSU99,AHC19,AHZ21}. 
Therefore, charging opens doors to new applications of bidding games for when safety objectives are involved.
For example, in {\em auction-based scheduling}~\cite{AMS24}, bidding games are used to compose two policies at runtime such that the objectives of both policies are fulfilled. With traditional bidding games, auction-based scheduling cannot support long-run safety due to the aforementioned reasons.
Bidding games with charging creates the possibility to extend auction-based scheduling for richer classes of objectives than what can be supported currently.

\begin{example}
	\label{ex:non-trivial-thresholds}
	We show that \PT, the safety player, wins the game depicted in Fig.~\ref{fig:examples:c} starting from $b$ when \PO's budget is $B_1 < \frac{3}{8}$. 
	Fulfilling safety requires the game to forever loop over $a$ and $b$; such an outcome is not possible in traditional bidding games since $t$ is reachable from both $a$ and $b$. 
	After charging at $b$, we have $B''_1 < \frac{1}{2}$. \PT bids $\frac{1}{2}$, trivially wins the bid and moves the token to $a$. Her budget is charged to at least $\frac{6}{7}$, meaning that \PO's budget is at most $\frac{1}{7}$. She bids $\frac{3}{16}$, trivially wins the bidding and move the token to $b$. When entering $b$ her budget is at least $\frac{6}{7}-\frac{3}{16}>\frac{5}{8}$, meaning that \PO's budget is less than $\frac{3}{8}$, and she can keep repeating the same strategy to win the game. \qed
\end{example}

The central quantity in bidding games is the pair of {\em thresholds} on the players' budgets which enable them to win.
Formally, for $i \in \set{1,2}$, \PLi's threshold at vertex $v$, denoted $\Th_i(v)$, is the smallest value in $[0,1]$ such that for every $\epsilon >0$, \PLi can guarantee winning from $v$ with an initial budget of $\Th_i(v) + \epsilon$. The thresholds in the vertices in Figures~\ref{fig:examples:a} and~\ref{fig:examples:c} are depicted beside them in blue. 
When $\Th_1(v) + \Th_2(v) = 1$, we say that a {\em threshold} exists and define the threshold to be $\Th(v) = \Th_1(v)$. Existence of thresholds is a form of {\em determinacy}: for every \PO budget $B_1 \neq \Th_1(v)$, one of the players has a winning strategy. 
We establish that bidding games with charging are also determined for reachability and B\"uchi objectives, and, dually, for safety and co-B\"uchi objectives. 
The proofs of these claims are however significantly more involved than the case of traditional bidding games. 
For instance, for traditional bidding games, the existence of thresholds for B\"uchi objectives follows from the existence of thresholds for reachability objectives, with the observation that for every bottom strongly connected component (BSCC), every vertex has a threshold $0$ or $1$, so that winning the B\"uchi game boils down to \emph{reaching} one of the BSCCs with thresholds $0$ (the ``winning'' BSCCs).
This approach fails for games with charging. First, players may be able to trap the game within an SCC that is not part of any BSCC, and second, the thresholds in a BSCC might not be all $0$ or $1$ as seen in Ex.~\ref{ex:non-trivial-thresholds}. In order to show the existence of thresholds in B\"uchi games, we develop a novel fixed point algorithm that is based on repeated solutions to reachability bidding games. 

We study the complexity of finding thresholds. Here too, the techniques differ and are more involved than traditional bidding games. 
In Richman-bidding games without charging, thresholds correspond to the unique solution of a system of linear equations. In games with charging, however, thresholds correspond to the least and the greatest fixed points, and we present a novel encoding of the problem using mixed-integer linear programming. 
We summarize our complexity results in Tab.~\ref{table:complexity upper-bounds} along with a comparison with known results in traditional bidding games.
Finally, we show that Richman games with Rabin and Streett objectives are NP-hard and coNP-hard, respectively. 
This result establishes the first lower complexity bound in any form of bidding games (with or without charging).
Upper bounds for Rabin and Streett objectives are left open.

Finally, we introduce and study a {\em repair} problem in bidding games: Given a bidding game, a target threshold $t$ in a vertex $v$, and a repair budget $C$, decide if it is possible to add charges to the vertices of $\G$ in a total sum that does not exceed $C$ such that $\Th(v) \leq t$. 
Repairing is relevant when the bidding game is not merely given to us as a fixed input, but rather the design of the game is part of the solution itself.
For instance, we have already mentioned auction-based scheduling~\cite{AMS24}, where the strongest guarantees can be provided when in two bidding games that are played on the same arena, the sum of thresholds in the initial vertex is less than $1$. When this requirement fails, repairing can be applied to lower the thresholds. 
We show that the repair problem for safety objectives is in PSPACE and for reachability objectives is in 2EXPTIME.

\newcommand{\woch}{w/o chg.}
\newcommand{\ch}{w/ chg.}
\colorlet{MyYellow}{yellow!50!white}
\newcolumntype{a}[1]{>{\centering\columncolor{MyYellow}}m{#1}}
\begin{table*}
	\centering
	\begin{tabular}{|m{1.2cm}|a{0.8cm}|c|a{0.8cm}|c|a{0.8cm}|c|a{0.8cm}|c|}
		\hline
		\multirow{2}{0.5cm}{}&	\multicolumn{2}{c|}{Reachability}	&	\multicolumn{2}{c|}{Safety}	&	\multicolumn{2}{c|}{B\"uchi}	&	\multicolumn{2}{c|}{Co-B\"uchi}\\
		\cline{2-9}
		 &	\ch	& \woch &	\ch	& \woch &	\ch	& \woch &	\ch & \woch	\\
		\hline
		\small Richman	& \small	\tiny coNP	&	\tiny $\textup{NP}\cap\textup{coNP}$	&	\tiny NP	&	\tiny $\textup{NP}\cap\textup{coNP}$	&	\tiny $\Pi_2^\mathrm{P}$	&	\tiny $\textup{NP}\cap\textup{coNP}$	&	\tiny $\Sigma_2^\mathrm{P}$	&	\tiny $\textup{NP}\cap\textup{coNP}$\\
		\hline
		\small Taxman and poorman	&	\tiny PSPACE	&	\tiny PSPACE	&	\tiny PSPACE	&	\tiny PSPACE	&	\tiny 2-EXP	&	\tiny PSPACE	&	\tiny 2-EXP	&	\tiny PSPACE\\
		\hline	
	\end{tabular}
	\caption{Upper complexity bounds for bidding games with charging (``w/ chg.'') in comparison with traditional bidding games (``w/o chg.'').}
	\label{table:complexity upper-bounds}
\end{table*}

\subsection*{Related work}
Bidding games (without charging) were introduced by Lazarus et al.~\cite{LLPU96,LLPSU99}, and were extended to infinite-duration objectives by Avni et al.~\cite{AHC19,AHI18,AHZ19,AJZ21}.
Many variants of bidding games have been studied, including {\em discrete}-bidding~\cite{DP10,AAH21,AS22}, which restricts the granularity of bids, {\em all-pay} bidding~\cite{AIT20,AJZ21}, which model allocation of non-refundable resources, partial-information games~\cite{AJZ23}, which restricts the observation power of one of the players, and non-zero-sum bidding games~\cite{MKT18}, which allow the players' objectives to be non-complimentary.
The inspiration behind charging comes from other forms of resource-constrained games that allow to refill depleted resources or accumulate new resources to perform certain tasks \cite{chakrabarti2003resource,bouyer2008infinite,blahoudek2022efficient}.
The unique challenge in our case is the additional layer of bidding, which separates resource (budget) accumulation and spending.


\section{Bidding Games with Charging}
\label{sec:model}

A \emph{bidding game with charging} is a two-player game played on an arena $\tup{\V,\E,\RO,\RT}$ between \PO and \PT,\footnote{We will use the pronouns ``he'' and ``she'' for \PO and \PT, respectively.} where 
$\V$ is a finite set of vertices, 
$\E\subseteq \V\times\V$ is a set of directed edges, 
and $\RO,\RT: V \rightarrow \Real_{\geq 0}$ are the charging functions of \PO and \PT, respectively. 
We denote the set of {\em successors} of the vertex $v$ by $\N(v) = \set{u: \zug{v,u} \in E}$.
Bidding games with no charging will be referred to as \emph{traditional} bidding games, which is a special case with $\RO \equiv \RT \equiv 0$. 
The default ones in this paper are bidding games with charging and, to avoid clutter, we typically refer to them simply as {\em bidding games}.

A bidding game proceeds as follows. 
A {\em configuration} of a bidding game is a pair $c = \zug{v, B_1} \in V \times [0,1]$, which indicates that the token is placed on the vertex $v$ and \PO's current budget is $B_1$. We always normalize the sum of budgets to $1$, thus, implicitly, \PT's budget is $B_2 = 1-B_1$. 
At configuration $c$, we charge and normalize the budgets. Formally, the game proceeds to an {\em intermediate configuration} $c' = \zug{v, B'_1}$ defined by $B'_1 = \frac{B_1+\RO(v)}{1+\RO(v)+\RT(v)}$. \PT's budget becomes $B'_2 = 1- B'_1 = \frac{B_2+\RT(v)}{1+\RO(v)+\RT(v)}$.
Then, the players simultaneously bid for the privilege of moving the token. Formally, for $i \in \set{1,2}$, \PLi chooses an {\em action} $\zug{b_i, u_i}$, where $b_i \in [0,B'_i]$ and $u_i \in \N(v)$. 
Given both players' actions, the next configuration is $\zug{u, B''_1}$, where $u = u_1$ when $b_1 \geq b_2$ and $u = u_2$ when $b_2 > b_1$, and $B''_1$ is determined based on the \emph{bidding mechanism} defined below. 
Note that we arbitrarily break ties in favor of \PO, but it can be shown that all our results remain valid no matter how ties are resolved. 
In the definitions below we assume that \PO is the higher bidder, i.e., $b_1 \geq b_2$, and the case where $b_2 > b_1$ is dual: 

\begin{description}
	\item[Richman bidding] The higher bidder pays his bid to the lower bidder. 
		Formally, 
			$\BO'' = \BO'-b_1$, and
			$\BT'' = \BT' + b_1$.
	\item[Poorman bidding] 
	The higher bidder pays his bid to the bank and we re-normalize the budget to sum up to $1$.
		Formally, 
		$\BO'' = \frac{\BO'-b_1}{1-b_1}$, and
		$\BT'' = \frac{\BT'}{1-b_1}$. 
	\item[Taxman bidding] For a predetermined and fixed fraction $\tau\in [0,1]$, called the \emph{tax rate}, the higher bidder pays fraction $\tau$ of his bid to the bank, and the rest to the lower bidder.
		Formally,
		$\BO'' = \frac{\BO'-b_1}{1-\tau\cdot b_1}$, and
		$\BT'' = \frac{\BT' + (1-\tau)\cdot b_1}{1-\tau\cdot b_1}$. Note that taxman bidding with $\tau=0$ coincides with Richman bidding and with $\tau=1$ coincides with poorman bidding. 
\end{description}

In a bidding game, a {\em history}  is a finite sequence $h = \zug{v_0, B_0}, \zug{v_0, B'_0},\ldots, \zug{v_n, B_n}, \zug{v_n, B'_n}$ which alternates between configurations and intermediate configurations. For $i \in \set{1,2}$, a \emph{strategy} for \PLi is a function $\stratI$ that maps a history to an action $\zug{b_i,u_i}$. 
We typically consider {\em memoryless} strategies, which are functions from intermediate configurations to actions. 
An initial configuration $c_0 = \zug{v_0, B_0}$ and two strategies $\strat_1$ and $\strat_2$ give rise to an infinite {\em play}, denoted $\play(c_0, \strat_1, \strat_2)$, and is defined inductively, where the inductive step is based on the definitions above. Let $\play(c_0, \strat_1, \strat_2) = \zug{v_0, B_0}, \zug{v_0, B'_0},\ldots$. The {\em path} that corresponds to $\play(c_0, \strat_1, \strat_2)$ is $v_0, v_1, \ldots \in V^\omega$. 

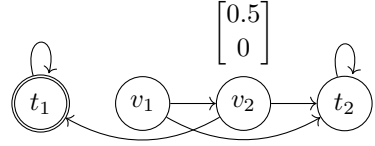
\begin{wrapfigure}{r}{0.4\textwidth}
	\centering
	\tikzset{every state/.style={minimum size=0pt}}
		\centering
		\begin{tikzpicture}[node distance=0.6cm]
			\node[state]	(v1)	at	(0,0)		{$v_1$};
			\node[state,label={above:$\begin{bmatrix}0.5\\ 0\end{bmatrix}$}]	(v2)	[right=of v1]	{$v_2$};
			\node[state,accepting]	(t1)	[left=of v1]	{$t_1$};
			\node[state]	(t2)	[right=of v2]	{$t_2$};
			
			\path[->]	(v1)	edge	(v2)
			(v1)	edge[bend right]			(t2)
			(v2)	edge[bend left] 			(t1)
			(v2)	edge 			(t2)
			(t1) edge[loop above] ()
			(t2) edge[loop above] ();
			
		\end{tikzpicture}
		\caption{Example of a poorman bidding game where without normalization thresholds are not uniquely determined.}
		\label{fig:examples:normalization}
\end{wrapfigure}
Each game is equipped with an {\em objective} $\spec \subseteq V^\omega$. 
Each play has a {\em winner}. \PO wins a play if its corresponding path is in $\spec$, and \PT wins otherwise. 
For an objective $\spec$, a \PO strategy $\strat_1$ is {\em winning} from a configuration $c$ if for every \PT strategy $\strat_2$, the play $\play(c, \strat_1, \strat_2)$ is winning for \PO, and the definition for \PT is dual. For $i \in \set{1,2}$, we say \PLi \ {\em wins} from configuration $c$ for $\varphi$ if he has a winning strategy from $c$. We will use $\compl{x}$ to denote the complement of $x$, where $x$ can be either an objective or a set of vertices.
We consider the following objectives:

\begin{description}
	\item[Reachability.] For a set of vertices $T\subseteq \V$, the reachability objective is defined as $\reach(T)\coloneqq\set{v_0v_1\ldots \in\V^\omega\mid \exists i\in\mathbb{N}\;.\;v_i\in T}$.
		Intuitively, $T$ represents the set of target vertices, and $\reach(T)$ is satisfied if $T$ is eventually visited by the given path.
	\item[Safety.] For a set of vertices $S\subseteq \V$, the safety objective is defined as $\safe(S)\coloneqq\set{v_0 v_1\ldots \in\V^\omega\mid \forall i\in\mathbb{N}\;.\;v_i\in S}$. 
		Intuitively, $S$ represents the set of safe vertices, and $\safe(S)$ is satisfied if $S$ is not left ever during the given path.
		Safety and reachability are dual to each other, i.e., $\safe(S) = \compl{\reach(\compl{ S})}$.
	\item[B\"uchi.] For a set of vertices $B\subseteq \V$, the B\"uchi objective is defined as $\buchi(B)\coloneqq \set{v_0 v_1\ldots\in\V^\omega\mid \forall i\in\mathbb{N}\;.\;\exists j>i\;.\; v_j\in B}$. 
		Intuitively, $\buchi(B)$ is satisfied if $B$ is visited infinitely often during the given path.
	\item[Co-B\"uchi.] For a set of vertices $C\subseteq \V$, the co-B\"uchi objective is defined as $\cobuchi(C)\coloneqq \set{v_0 v_1\ldots\in \V^\omega\mid \exists i\in \mathbb{N}\;.\;\forall j>i\;.\; v_j\in C}$. 
		Intuitively, $\cobuchi(C)$ is satisfied if only $C$ is visited from some point onward during the play.
		B\"uchi and co-B\"uchi objectives are dual to each other, i.e., $\cobuchi(C) = \compl{\buchi(\compl{C})}$.
\end{description}

A central concept in bidding games is the pair of {\em thresholds} for the two players.
Roughly, they are the smallest budgets needed by the respective player for winning the game from a given vertex.
We formalize this below.

\begin{definition}[Thresholds]
	Let $\G$ be a given arena and $M\in \lbrace\mathit{Richman}, \mathit{poorman}, \mathit{taxman}\rbrace$ be a given bidding mechanism .
	For an objective $\varphi$, the thresholds $\Th^{\G,M,\varphi}_1,\Th^{\G,M,\varphi}_2 \colon \V\to [0,1]$ are functions such that for every $v\in \V$ and every $\epsilon > 0$:
	\begin{itemize}
	\item $\Th^{\G,M,\varphi}_1(v) \coloneqq \inf_{B_1 \in [0,1]} \{ B_1:$ \PO wins from $\zug{v, B_1+\epsilon}$ for $\varphi$, for every $\epsilon >0\}$.
	\item $\Th^{\G,M,\varphi}_2(v) \coloneqq \inf_{B_2 \in [0,1]} \{ B_2:$ \PT wins from $\zug{v, 1-B_2-\epsilon}$ for $\compl{\varphi}$, for every $\epsilon >0\}$.
	\end{itemize}
	When $\Th^{\G,M,\varphi}_1(v) + \Th^{\G,M,\varphi}_2(v) = 1$ for every vertex $v$, we say that the {\em threshold exists} in $\G$, denote it $\Th^{\G,M,\varphi}(v)$, and define $\Th^{\G,M,\varphi}(v) = \Th^{\G,M,\varphi}_1(v)$. 
\end{definition}
Whenever the game graph and the bidding mechanism are clear from the context, we simply write $\ThO^\spec$, $\ThT^\spec$, and $\Th^\spec$.

\begin{example}
\label{ex:normalization}{\bf (The importance of normalization).}
Consider the poorman bidding game that is depicted in Fig.~\ref{fig:examples:normalization}. Intuitively, \PO wins from $v_1$ iff he wins the first two consecutive biddings.
Formally, the game starts at $v_1$ and $t_i$ is \PLi's target, for $i \in \set{1,2}$. We first analyze the game with a normalization step. We argue that $\Th(v_2) =\frac{1}{4}$; indeed, since $\frac{1/4+1/2}{3/2} = \frac{1}{2}$, entering $v_2$ with a budget greater than $\frac{1}{4}$ allows \PO to secure winning. We argue that $\Th(v_1) = \frac{4}{7}$; indeed, \PO must bid above \PT's budget and win the bidding, and note that $\frac{4/7 - 3/7}{3/7} = \frac{1}{4}$. Note that thresholds are in fact a {\em ratio}. Stated differently, consider a configuration $\zug{v_1, B_1, B_2}$ with $B_1+B_2$ not necessarily equals $1$, then \PO wins iff $\frac{B_1}{B_1 + B_2} > \frac{4}{7}$. Crucially, the ratio between $B_1$ and $B_2$ is fixed. We will prove that this is a general phenomenon on which our algorithms depend. 

When a normalization step is \emph{not} performed, \PO's threshold for winning is a non-linear function of \PT's initial budget. Intuitively, when no normalization is performed, the charge is more meaningful when the budgets are smaller.
Consider a configuration $\zug{v_1, B_1, B_2}$ with $B_1+B_2$ being not necessarily equal to $1$. 
Note that \PO must win the first bidding, thus the second configuration must be $\zug{v_2, B_1 - B_2, B_2}$. When no normalization is performed after charging, the intermediate configuration is $\zug{v_2, B_1 - B_2 + 0.5, B_2}$. Clearly, \PO wins iff $B_1 - B_2 + 0.5 > B_2$. For example, when $B_2 = 1$, then \PO's threshold is $\frac{3}{2}$ and when $B_2 = 2$, then \PO's threshold is $\frac{7}{2}$. These amount to ratios of $\frac{3}{5}$ and $\frac{7}{11}$, respectively, meaning that \PO's threshold is a non-linear function of \PT's budget. We point out that this is also the case in poorman discrete-bidding games~\cite{AM+23}, where thresholds can only be approximated, even in extremely simple games.
\qed
\end{example}

We formulate the decision problem related to the computation  of thresholds.
We will write that a given objective \emph{$\spec$ is of type} $\reach$, $\safe$, $\buchi$, or $\cobuchi$ if $\spec$ can be expressed as a reachability, safety, B\"uchi, or co-B\"uchi objective (on a given arena), respectively.

\begin{definition}[Finding threshold budgets]
Let $M\in \set{\mathit{Richman},\mathit{poorman},$ $\mathit{taxman}}$, and  $S\in \set{\reach,\safe,\buchi,\cobuchi}$.
The problem $\THRESH^M_S$ takes as input an arena $\G$, an initial vertex $v$, and an objective $\varphi \in S$, and accepts the input iff $\ThO^{\G,M,\varphi}(v) \leq 0.5$.
\end{definition}


\section{Reachability Bidding Games with Charging}
\label{sec:reachability}

In this section, we show the existence of thresholds in taxman-bidding games with charging with reachability and, dually, with safety objectives. 
Throughout this section, we fix an arena $\G=\tup{\V,\E,\RO,\RT}$. 
For a given set of vertices $T\subseteq \V$, the objective of \PO, the reachability player, is $\reach(T)$, and, the objective of \PT, the safety player, is $\safe(\compl{T})$. 

\subsection{Bounded-Horizon Reachability and Safety}
We start with the simpler case of {\em bounded-horizon} reachability objectives, and in the next section, we will extend the technique to general games. 
Let $t \in \Nat$. The bounded-horizon reachability, denoted $\reach(T, t)$, intuitively requires \PO to reach $T$ within $t$ steps. Formally, $\reach(T,t)\coloneqq\set{v_0 v_1\ldots\mid \exists i\leq t\;.\;v_i\in T}$. {\em Bounded-horizon safety} is the dual objective $\safe(\compl{T},t)\coloneqq\set{v_0 v_1\ldots\mid\forall i\leq t\;.\;v_i\notin T} = V^\omega\setminus \reach(T,t)$. 

In the following, we characterize the thresholds for $\reach(T, t)$ and $\safe(\compl{T}, t)$ by induction on $t$. The induction step relies on the following operator on functions.

\begin{definition}
	\label{def:taxman-upd}
	Define the function $\clamp{x} \coloneqq\min(1,\max(0,x))$; that is, given $x$, $\clamp{x} = x$, when $0 < x < 1$, and otherwise it ``saturates'' $x$ at the boundaries $0$ or $1$. 
	Let $\tau\in [0,1]$ be the tax rate.
	 We define two operators on functions $\updO,\updT \colon [0,1]^V \to [0,1]^V$ as follows. For $i \in \set{1,2}$ and $f\in [0,1]^\V$:
	\[\updI(f)(v) \coloneqq \clamp{\frac{(1-\tau)f(v^-)+f(v^+)}{[f(v^+)-f(v^-)-1]\tau + 2} \cdot (1+\RO(v)+\RT(v)) - \RI(v)}\]
	where $v^+$ and $v^-$ are the successors of $v$ with the largest and the smallest value of $f(\cdot)$, respectively, i.e., $v^+ = \arg\max_{u \in \N(v)} f(u)$ and $ v^- = \arg\min_{u \in \N(v)} f(u)$.
\end{definition}

Note that for Richman bidding, i.e., when $\tau = 0$, for $i \in \set{1,2}$, we have 
\[
\updI(f)(v) \coloneqq \clamp{\frac{f(v^+)+f(v^-)}{2} \cdot \big(1+R_1(v)+R_2(v)\big)- \RI(v)}
\]
In this case, the function $\updI$ computes the average (the name ``$\mathit{Av}$'' stands for ``average'') of its argument $f$ on $v^-$ and $v^+$, and then performs an affine transformation followed by the saturation $\clamp{\cdot}$ on the result.
For poorman bidding, i.e., when $\tau = 1$, we have
	\[
	\updI(f)(v) \coloneqq \clamp{\frac{f(v^+)}{f(v^+)-f(v^-)+1} \cdot \big(1+\RO(v)+\RT(v)\big) - \RI(v)}
	\]
\stam{
	\begin{align*}
		&\updO(f)(v) \coloneqq\\
		&\quad \clamp{\frac{f(v^+)+f(v^-)}{2}\big(1+R_1(v)+R_2(v)\big)- \RO(v)},\\
		&\updT(f)(v) \coloneqq\\
		&\quad \clamp{\frac{f(v^+)+f(v^-)}{2}\big(1+R_1(v)+R_2(v)\big)- \RT(v)},
	\end{align*}
}

We define two functions $f_1$ and $f_2$ which will be shown to coincide with the thresholds. 

\begin{definition}
	\label{def:richman-reach-th}
	Define the functions $f_1, f_2\colon \V\times \mathbb{N}\to [0,1]$ inductively on $t$. For every $v\in T$ and $t\in \Nat$, define $f_1(v,t) \coloneqq 0$ and $f_2(v, t) \coloneqq 1$. 	
	For every $ v\notin T$, define $f_1(v,0) \coloneqq 1$ and $f_2(v,0) \coloneqq 0$, and for every $t > 0$, define $f_1(v,t) \coloneqq \updO\left( f_1(\cdot,t-1) \right)(v)$ and $f_2(v,t) \coloneqq \updT\left(f_2(\cdot,t-1)\right)(v)$.
\end{definition}

Lem.~\ref{lem:soundness of time-varying threshold} shows that $f_1$ and $f_2$ coincide with the thresholds of the (bounded-horizon) reachability and safety players, respectively. 
Intuitively, for $\reach(T,0)$, \PO wins with even zero budget from vertices that are already in $T$, and loses with even the maximum budget from vertices that are not in $T$.
We capture this as $f_1(v,0)=0$ if $v\in T$, and $f_1(v,0)=1$ otherwise. Furthermore, if \PO has a budget more than $f_1(v,t)$ at $v$, then we show that he has a memoryless policy such that no matter which vertex $v'$ the token reaches in the next step, his budget will remain more than $f_1(v',t-1)$. It follows inductively that he will reach $T$ in $t$ steps from $v$. The argument for the safety player is dual. Lem.~\ref{lem:soundness of time-varying threshold} also establishes the existence of thresholds.

\begin{restatable}[]{lemma}{lemExistanceThreshBoundedReach}\label{lem:soundness of time-varying threshold}
For every vertex $v \in V$ and $t \geq 0$, we have $\Th_1^{\reach(T, t)}(v)=f_1(v,t)$ and $\Th_2^{\reach(T, t)}(v)=f_2(v,t)$. Moreover, thresholds exist:  $\Th_1^{\reach(T, t)}(v) + \Th_2^{\reach(T, t)}(v) = 1$.
\end{restatable}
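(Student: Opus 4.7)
The plan is to prove all three assertions simultaneously by induction on the horizon $t$. Since a finite-horizon reachability game is immediately determined (the objective is a clopen set in the path space), one has $\Th_1^{\reach(T,t)}(v) + \Th_2^{\reach(T,t)}(v) \geq 1$ for every $v$, so once the two upper bounds $\Th_i^{\reach(T,t)} \leq f_i$ are combined with the algebraic identity $f_1(v,t) + f_2(v,t) = 1$, both inequalities must be equalities and the threshold exists. The base case $t=0$ is immediate from Def.~\ref{def:richman-reach-th}: a path wins $\reach(T,0)$ iff it starts in $T$, so $v \in T$ gives $\Th_1=0=f_1(v,0)$ and $\Th_2=1=f_2(v,0)$, while $v \notin T$ gives the dual values.

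For the inductive step, I fix $v \notin T$ (the case $v \in T$ being trivial) and let $v^+, v^-$ denote the successors maximizing and minimizing $f_1(\cdot, t-1)$, with values $p \geq q$. The inductive hypothesis $f_2 = 1 - f_1$ on every successor implies that the roles of $v^+$ and $v^-$ swap for $f_2$: the safest successor for the safety player is $v^+$ (it minimizes $f_2$) and the most dangerous is $v^-$.

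The identity $f_1(v,t) + f_2(v,t) = 1$ follows from a direct computation on the pre-clamp values. The common denominator appearing in both $\updO$ and $\updT$ in Def.~\ref{def:taxman-upd} is $2 + \tau(p-q-1)$, and a short expansion shows that the sum of the two numerators equals $(1+\RO(v)+\RT(v))$ times that denominator, so the pre-clamp values sum to $1$; clamping preserves the identity because whenever one pre-clamp value falls below $0$ the other exceeds $1$, and both are then pushed to opposite boundaries of $[0,1]$. For the upper bound on \PO's threshold, suppose $B_1 > f_1(v,t)$. The charge-and-normalize step produces an intermediate budget $B_1' = (B_1+\RO(v))/(1+\RO(v)+\RT(v))$, which (in the unclamped regime) is equivalent to $B_1' > \tilde g := ((1-\tau)q + p)/(2+\tau(p-q-1))$; the clamped cases $f_1(v,t) \in \{0,1\}$ are handled directly. \PO now bids $b^\ast = (p-q)/(2+\tau(p-q-1))$ and moves to $v^-$ upon winning. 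Substituting into the taxman budget-update formula verifies that, at the critical intermediate budget $B_1' = \tilde g$, winning leaves \PO with exactly $q = f_1(v^-,t-1)$, while losing against any tie-matching counter-bid leaves him with exactly $p = f_1(v^+,t-1) \geq f_1(u,t-1)$ for any successor $u$ that \PT can select. The strict inequality $B_1' > \tilde g$ then yields a strict surplus over the next-vertex threshold, so the inductive hypothesis at $t-1$ supplies a winning continuation. The symmetric strategy for \PT, using the same $b^\ast$ with $v^+$ now playing the role of \PT's safe move, establishes $\Th_2^{\reach(T,t)} \leq f_2(v,t)$.

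The main obstacle lies in the algebraic verification that the single bid $b^\ast$ drives both post-bid budgets precisely to $q$ and to $p$ at the two extremal successors once charging, normalization, and the taxman payment rule are composed; this is the computation that reveals why the somewhat opaque formula in Def.~\ref{def:taxman-upd} is the right extension of the Richman average, and it is the place where the clamp must be tracked carefully so that the inductive hypothesis at the previous horizon remains applicable.
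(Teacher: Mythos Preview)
Your proposal is correct and follows essentially the same route as the paper: induction on $t$, the algebraic identity $f_1+f_2=1$, and the single bid $b^\ast=(p-q)/\bigl(2+\tau(p-q-1)\bigr)$ with the move to $v^-$ upon winning, which is exactly the paper's taxman strategy (and specializes to the Richman bid $(p-q)/2$). One small remark: the inequality $\Th_1+\Th_2\geq 1$ does not need clopen determinacy (bidding games are concurrent, so Gale--Stewart does not apply directly); it holds simply because the two objectives are complementary, so both players cannot win from the same configuration.
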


\begin{proof}
We provide separate proofs for the Richman and taxman bidding mechanisms: the former is for easier intuitive explanation and the latter is for completeness.

\smallskip
\noindent\textbf{Proof for the simpler case of Richman bidding.}
	A straightforward arithmetic verification shows that $f_1(v, t) =1- f_2(v, t)$. 
	To show $f_1(v,t) \geq \Th_1^{\reach(T, t)}(v)$, we describe a winning \PO strategy from configuration $\zug{v, f_1(v, t) + \epsilon}$ for an arbitrary $\epsilon > 0$. 
	It is dual to show $f_2(v, t) \geq \Th_2^{\reach(T, t)}(v)$. 
	Equality then follows from the relationship $f_1(v, t) = 1 - f_2(v, t)$.
	
	The proof of $f_1(v,t) \geq \Th_1^{\reach(T, t)}(v)$ proceeds by induction over $t$. The base case is $t=0$. Clearly, for $v \in T$, it holds that $\Th_1^{\reach(T, 0)}(v) = f_1(v,0) = 0$ and for $v \notin T$, that $\Th_1^{\reach(T, 0)}(v) = f_1(v,0) = 1$. 
	
	For $t \geq 1$, assume that $f_1(v,t-1) \geq \Th_1^{\reach(T, t-1)}(v)$, and we prove the claim for $t$. 
	Let $v \in V$ and $B_1>f_1(v, t)$. We describe a \PO winning strategy from $\zug{v, B_1}$.
	\PO bids $b_1 = \frac{f_1(v^+, t-1) - f_1(v^-, t-1)}{2}$, and recall that $v^+$ and $v^-$ are the successors of $v$ that attain the maximal and minimal value of $f_1(\cdot, t-1)$. Upon winning, \PO proceeds to $v^-$. 
	Let $\zug{v', B'_1}$ be the next configuration given some \PT action. We claim that \PO's strategy guarantees that $B'_1 > f_1(v', t-1)$. Thus, by the induction hypothesis, \PO reaches $T$ in at most $t$ steps and the claim follows. 
	
	To see this, first observe that, by definition, $f_1(v, t)$ can be either $0$, $1$, or $x \coloneqq \frac{f_1(v^+,t-1)+f_1(v^-,t-1)}{2}\big(1+R_1(v)+R_2(v)\big)- R_1(v)$.
	However, since $\BO\in [0,1]$ and $\BO>f_1(v,t)$, we can rule out the case of $f_1(v,t)=1$.
	Therefore, we have $f_1(v,t) = \max(0,x)\geq x$. We distinguish between two cases. 
	First, suppose that \PO wins the bidding, thus he pays his bid to the opponent and moves the token to $v^-$.
	His updated budget at $v^-$ becomes:
	\begin{align*}
		\frac{\BO+\RO(v)}{1+\RO(v)+\RT(v)} - \bo 
		&> \frac{f_1(v,t) + \RO(v)}{1+\RO(v)+\RT(v)} - \bo\\
		&\geq \frac{x + \RO(v)}{1+\RO(v)+\RT(v)} - \bo\\
		&= \frac{f_1(v^+,t-1)+f_1(v^-,t-1)}{2}\\ 
		&\quad- \frac{f_1(v^+,t-1)-f_1(v^-,t-1)}{2} \\ &= f_1(v^-,t-1).
	\end{align*}
	Second, suppose that \PO loses the bidding, thus \PT pays him at least $b_1$ and the token moves to $v' \in \N(v)$. Recall that $f_1(v')\leq f_1(v^+)$.
	The updated budget of \PO at $v'$ becomes:
	\begin{align*}
		\frac{\BO+\RO(v)}{1+\RO(v)+\RT(v)} + \bo 
		&> \frac{f_1(v,t) + \RO(v)}{1+\RO(v)+\RT(v)} + \bo\\
		&\geq \frac{x + \RO(v)}{1+\RO(v)+\RT(v)} + \bo\\
		&= \frac{f_1(v^+,t-1)+f_1(v^-,t-1)}{2}\\
		&\quad + \frac{f_1(v^+,t-1)-f_1(v^-,t-1)}{2}\\
		&= f_1(v^+,t-1)\geq f_1(v',t-1).
	\end{align*}

\smallskip
\noindent\textbf{Proof for taxman bidding.}
	Consider a taxman bidding game $\G$ with tax rate $\tau$ and reachability objective $\reach(T)$.
	
	We first show that $f_1(v,t)+f_2(v,t)=1$. We use induction on $t$: For $t=0$ the following holds:
	\[
	\begin{split}
		\forall v \in T,~ & f_1(v,0)+f_2(v,0) = 1 \\
		\forall v \notin T,~ & f_1(v,0)+f_2(v,0) = 1 \\
	\end{split}
	\]
	For $t>0$, by induction hypothesis, suppose that $f_1(v,t-1)+f_2(v,t-1)=1$ for all $v \in V$. Then, for $v \in T$, we trivially have $f_1(v,t)+f_2(v,t)=1$. For $v \notin T$, the following holds:
	\[
	\begin{split}
		&\frac{(1-\tau)f_2(v^-,t-1) + f_2(v^+,t-1)}{[f_2(v^+,t-1)-f_2(v^-,t-1)-1]\tau +2}(1+\RO(v)+\RT(v)) - \RT(v) \\
		+&  \frac{(1-\tau)f_1(v^-,t-1) + f_1(v^+,t-1)}{[f_1(v^+,t-1)-f_1(v^-,t-1)-1]\tau +2}(1+\RO(v)+\RT(v))-\RO(v)\\
		=& \frac{(1-\tau)f_2(v^-,t-1) + f_2(v^+,t-1)}{[f_2(v^+,t-1)-f_2(v^-,t-1)-1]\tau +2}(1+\RO(v)+\RT(v)) - \RT(v) \\
		+ & \frac{-(1-\tau)f_2(v^+,t-1) - f_2(v^-,t-1) + 2-\tau}{[f_2(v^+,t-1)-f_2(v^-,t-1)-1]\tau +2}(1+\RO(v)+\RT(v))-\RO(v)\\
		=& (1+\RO(v)+\RT(v))-\RT(v) - \RO(v) = 1 \\
	\end{split}
	\]
	It follows easily that $f_1(v,t)+f_2(v,t)=1$. 
	
	Next, we show that $f_1(v,t) \leq \ThO^{\reach(T,t)}(v)$ for all $v$, by describing a winning strategy of \PO from configuration $\tup{v,f_1(v,t)+\epsilon}$. We again use induction on $t$. For $t=0$ the statement holds trivially, i.e. $f_1(v,0)=0$ for $v \in T$ and $f_1(v,0)=1$ for $v \notin T$. For $t>0$, suppose \PO has budget $B= f_1(v,t)+\epsilon$. He can bid $b=\frac{f_1(v^+,t-1)-f_1(v^-,t-1)}{[f_1(v^+,t-1)-f_1(v^-,t-1)-1]\tau +2}$. In case of winning he will move the token to $v^-$ and have $B'$ budget left where
	\[
	\begin{split}
		B' =~& \frac{\frac{B+\RO(v)}{1+\RO(v)+\RT(v)}-b}{1-\tau b} \\
		>~& \frac{\frac{(1-\tau)f_1(v^-,t-1) + f_1(v^+,t-1)}{[f_1(v^+,t-1)-f_1(v^-,t-1)-1]\tau +2} - \frac{f_1(v^+,t-1)-f_1(v^-,t-1)}{[f_1(v^+,t-1)-f_1(v^-,t-1)-1]\tau +2}}{\frac{2-\tau}{[f_1(v^+,t-1)-f_1(v^-,t-1)-1]\tau +2}} \\
		=~ &\frac{(1-\tau)f_1(v^-,t-1) + f_1(v^+,t-1) - f_1(v^+,t-1)+f_1(v^-,t-1)}{2-\tau}\\
		=~& f_1(v^-,t-1)
	\end{split}
	\]
	So, by induction hypothesis he makes the game reach $T$ from $v^-$ in at most $t-1$ steps. 
	
	In case of losing, the token will be moved to some successor $u$ of $v$ where $f_1(u,t-1) \leq f_1(v^+,t-1)$ and he will have $B'$ budget left where:
	\[
	\begin{split}
		B' >~& \frac{\frac{B+\RO(v)}{1+\RO(v)+\RT(v)}+(1-\tau)b}{1-\tau b} \\
		>~& \frac{\frac{(1-\tau)f_1(v^-,t-1) + f_1(v^+,t-1)}{[f_1(v^+,t-1)-f_1(v^-,t-1)-1]\tau +2} + \frac{(1-\tau)[f_1(v^+,t-1)-f_1(v^-,t-1)]}{[f_1(v^+,t-1)-f_1(v^-,t-1)-1]\tau +2}}{\frac{2-\tau}{[f_1(v^+,t-1)-f_1(v^-,t-1)-1]\tau +2}}\\
		=~ & \frac{(1-\tau)f_1(v^-,t-1) + f_1(v^+,t-1) + (1-\tau)[f_1(v^+,t-1)-f_1(v^-,t-1)]}{2-\tau}\\
		=~& f_1(v^+,t-1) \\
		\geq~ & f_1(u,t-1)
	\end{split}
	\]
	Again, by induction hypothesis, he will have enough budget to reach $T$ in at most $t-1$ steps from $u$.
	
	For $f_2(v,t)$, \PT has a similar strategy based on $f_2(v,t-1)$. Note that at each point of the game, she is maintaining $B_2 > f_2(v,t)$, implying that $f_2(v,t) < 1$ which means $v \notin T$. So, if she has more budget than $f_2(v,t)$, she can prevent $T$ from being reached for at least $t$ steps.
\end{proof}

The following lemma establishes monotonicity of $f_1$ and $f_2$ with respect to $t$, which will play a key role in the proof of existence of thresholds for the unbounded counterparts of the objectives. 
Intuitively, reaching $T$ within $t$ steps is harder than reaching $T$ within $t' > t$ turns, thus less budget is needed for the latter case. Dually, guaranteeing safety for $t$ turns is easier than guaranteeing safety for $t' > t$ turns.

\begin{restatable}{lemma}{lemMonotonicThreshBoundedReach}
\label{lem:monotonic}
For $v\in\V$ and $t'> t$, it holds that $f_1(v,t')\leq f_1(v,t)$ and $f_2(v,t')\geq f_2(v,t)$.
\end{restatable}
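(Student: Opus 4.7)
The plan is to reduce the claim to the one-step inequalities $f_1(v, t+1) \leq f_1(v, t)$ and $f_2(v, t+1) \geq f_2(v, t)$ for every $v \in \V$ and $t \geq 0$; the general case $t' > t$ then follows by iterating these one-step inequalities $t' - t$ times.

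The base case $t = 0$ can be settled directly. For $v \in T$, the values $f_1(v, \cdot) \equiv 0$ and $f_2(v, \cdot) \equiv 1$ are constant in $t$. For $v \notin T$, we have $f_1(v, 0) = 1$ and $f_2(v, 0) = 0$, which are already the extremes of the image of $\clamp{\cdot}$, so $f_1(v, 1) \leq 1 = f_1(v, 0)$ and $f_2(v, 1) \geq 0 = f_2(v, 0)$ hold trivially.

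The induction step will follow from a monotonicity property of the operators: if $g \leq h$ pointwise on $\V$, then $\updI(g)(v) \leq \updI(h)(v)$ for every $v$ and $i \in \{1, 2\}$. Granted this, the induction hypothesis $f_1(\cdot, t) \leq f_1(\cdot, t-1)$ propagates to $f_1(\cdot, t+1) = \updO(f_1(\cdot, t)) \leq \updO(f_1(\cdot, t-1)) = f_1(\cdot, t)$ on $\V \setminus T$, while the dual inequality takes care of $f_2$; on $T$ both sides are constant.

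The main work is verifying this monotonicity. Writing $\updI(f)(v) = \clamp{H(x, y) \cdot (1 + \RO(v) + \RT(v)) - \RI(v)}$ with $x = \min_{u \in \N(v)} f(u)$, $y = \max_{u \in \N(v)} f(u)$, and $H(x, y) = \frac{(1-\tau)x + y}{[y - x - 1]\tau + 2}$, I first observe that $f \mapsto \min_u f(u)$ and $f \mapsto \max_u f(u)$ over the fixed neighborhood $\N(v)$ are monotone, so it suffices to show that $H$ is non-decreasing in each coordinate on $[0,1]^2$. A short partial-derivative computation gives $\partial_y H = (2-\tau)(1-\tau x)/D^2$ and $\partial_x H = (2-\tau)(1-\tau(1-y))/D^2$, where $D = [y - x - 1]\tau + 2$; both numerators are non-negative for $\tau, x, y \in [0,1]$. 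Composing with the affine map of positive slope $u \mapsto u(1 + \RO(v) + \RT(v)) - \RI(v)$ and with the monotone $\clamp{\cdot}$ then preserves the inequality. The one delicate point to check is $D > 0$ along the comparison path $(x_g, y_g) \to (x_g, y_h) \to (x_h, y_h)$: the only zero of $D$ on $[0,1]^2$ occurs at $(\tau, x, y) = (1, 1, 0)$, which is avoided because $x_g \leq x_h \leq y_h$. For Richman ($\tau = 0$), $H$ reduces to the arithmetic mean and monotonicity is immediate.
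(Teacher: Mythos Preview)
Your proposal is correct and follows essentially the same approach as the paper: both proceed by induction on $t$ and reduce the inductive step to the coordinatewise monotonicity of the function $H(x,y)=P_\tau(x,y)=\frac{(1-\tau)x+y}{(y-x-1)\tau+2}$, established via the partial derivatives $\partial_x H=(2-\tau)(1-\tau(1-y))/D^2$ and $\partial_y H=(2-\tau)(1-\tau x)/D^2$. The only cosmetic differences are that you package the inductive step as monotonicity of the operator $\updI$ (which cleanly absorbs the fact that $v^+,v^-$ may change between $t$ and $t-1$), and you treat $f_2$ by the dual argument rather than invoking $f_1+f_2=1$ as the paper does; neither changes the substance.
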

\begin{proof}
	Let $0 \leq \tau \leq 1$ and Define function $P_\tau: [0,1] \times [0,1] \to \Real$ as follows:
	\[
	P_\tau(x,y) = \frac{(1-\tau)x + y}{(y-x-1)\tau+2}
	\]
	The following holds:
	\[
	\begin{split}
		\frac{\partial P_\tau}{\partial x} &= \frac{(1-\tau)[(y-x-1)\tau +2] + \tau [(1-\tau) x + y]}{[(y-x-1)\tau+2]^2} \\
		&= \frac{\tau(2-\tau)y+(1-\tau)(2-\tau)}{[(y-x-1)\tau+2]^2} \geq 0\\
		\frac{\partial P_\tau}{\partial y} & = \frac{[(y-x-1)\tau+2] - \tau [(1-\tau)x+y]}{[(y-x-1)\tau+2]^2}\\
		& = \frac{-\tau(2-\tau)x + (2-\tau)}{[(y-x-1)\tau+2]^2} = \frac{(2-\tau)(1-\tau x )}{[(y-x-1)\tau+2]^2} \geq 0
	\end{split}
	\]
	We prove monotonicity of $f_1(v,t)$ by induction on $t$. For $t=1$, the statement trivially holds. For $t>1$, It follows by definition that
	\[
	\begin{split}
		f_1(v,t+1) =& \clamp{P_\tau\big(f_1(v^-,t), f_1(v^+,t)\big)\times
			\big(1+\RO(v) + \RT(v)\big) - \RO(v)}\\
		f_1(v,t) =& \clamp{P_\tau\big(f_1(v^-,t-1), f_1(v^+,t-1)\big)\times
			\big(1+\RO(v) + \RT(v)\big) - \RO(v)}\\
	\end{split}
	\]
	From induction hypothesis it follows that $f_1(v^+,t) \leq f_1(v^+,t-1)$ and $f_1(v^-,t) \leq f_1(v^-,t-1)$. Therefore, $P_\tau\big(f_1(v^-,t), f_1(v^+,t)\big) \leq P_\tau\big(f_1(v^-,t-1), f_1(v^+,t-1)\big)$ which directly implies $f_1(v,t+1) \leq f_1(v,t)$.
	
	As shown in Lem. \ref{lem:soundness of time-varying threshold}, it holds that $f_1(v,t)+f_2(v,t)=1$, therefore, $f_2(v,t)$ is increasing in $t$.
\end{proof}

\subsection{Existence of Thresholds (for Reachability and Safety Objectives)}

We define two functions $f_1^*$ and $f_2^*$ which will be shown to coincide with the thresholds for the unbounded horizon reachability and safety objectives, respectively.

\begin{definition}
	\label{def:richman-reach-th-limit}
	Define the functions $f_1^*,f_2^*\colon \V\to [0,1]$, such that for every $v\in \V$:
\[
		f_1^*(v) \coloneqq \lim_{t\to\infty} f_1(v,t) \quad \text{ and } \quad
		f_2^*(v) \coloneqq \lim_{t\to\infty} f_2(v,t).
\]
\end{definition}

Since $f_1$ and $f_2$ are bounded in $[0,1]$ and monotonic by Lem.~\ref{lem:monotonic}, the limits in Def.~\ref{def:richman-reach-th-limit} are well defined.
Since $f_1(v,0)$ and $f_2(v,0)$ assign, respectively, the maximum (i.e., $1$) and the minimum (i.e., $0$) value to every vertex $v\notin T$, hence from the Kleene fixed point theorem, it follows that $f_1^*$ and $f_2^*$ will be, respectively, the greatest and the least fixed points of the operators $\updO$ and $\updT$ on the directed-complete partial order $\tup{[0,1]^\V,\leq}$.

\begin{proposition}\label{prop:richman:reach:fixpoint}
	Consider the directed-complete partial order $L=\tup{[0,1]^{\V},\leq}$, where for every $x,y\in [0,1]^{\V}$, $x\leq y$ iff $x_i\leq y_i$ for every $i\in \V$.
	The functions $f_1^*$ and $f_2^*$ are, respectively, the greatest and the least fixed points of the functions $\updO$ and $\updT$ on $L$, subjected to the constraints $f_1^*(v)=0$ and $f_2^*(v)=1$ for every $v\in T$.
\end{proposition}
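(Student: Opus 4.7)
The plan is to apply Kleene's fixed point theorem (and its dual) to the operators $\updO$ and $\updT$, suitably restricted to the sub-posets of $L$ that enforce the boundary conditions. Define $L_1 \coloneqq \{f \in [0,1]^\V : f(v) = 0 \text{ for every } v \in T\}$ and $L_2 \coloneqq \{f \in [0,1]^\V : f(v) = 1 \text{ for every } v \in T\}$; each is a directed-complete sub-poset of $L$. Note that $L_1$ has a top element $\bar f_1 \coloneqq f_1(\cdot,0)$, and $L_2$ has a bottom element $\bar f_2 \coloneqq f_2(\cdot,0)$. Extend $\updO$ by declaring $\updO(f)(v) \coloneqq 0$ for every $v \in T$ (consistent with Def.~\ref{def:richman-reach-th}), so that $\updO$ maps $L_1$ to itself; extend $\updT$ analogously on $L_2$.

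The first step is monotonicity of $\updO$ on $L_1$. The partial-derivative calculation already carried out in the proof of Lem.~\ref{lem:monotonic} shows that the underlying arithmetic expression is non-decreasing in each of the two scalar arguments $f(v^-)$ and $f(v^+)$ over $[0,1]^2$. If $f \leq g$ pointwise, then $\max_{u \in \N(v)} f(u) \leq \max_{u \in \N(v)} g(u)$ and $\min_{u \in \N(v)} f(u) \leq \min_{u \in \N(v)} g(u)$; since $\updO(f)(v)$ depends on $f$ only through these two scalars (so the choice of $\arg\max/\arg\min$ in case of ties is irrelevant), and since $\clamp{\cdot}$ is monotone, we conclude $\updO(f) \leq \updO(g)$. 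The same argument applies to $\updT$.

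Next I would verify that $f_1^*$ is a fixed point and then prove its maximality. By Lem.~\ref{lem:monotonic}, the sequence $(f_1(\cdot,t))_{t \in \Nat}$ lies in $L_1$, starts at $\bar f_1$, is monotonically decreasing, and is bounded below by $0$, so it converges pointwise to $f_1^*$. Because $\updO$ is a composition of continuous operations (min, max over a finite successor set, arithmetic with strictly positive denominators, and $\clamp{\cdot}$), continuity yields $\updO(f_1^*) = \lim_t \updO(f_1(\cdot,t)) = \lim_t f_1(\cdot,t+1) = f_1^*$, making $f_1^*$ a fixed point in $L_1$. For maximality, let $g \in L_1$ be any fixed point of $\updO$; then $g \leq \bar f_1$ because $\bar f_1$ is the top of $L_1$, and iterating monotonicity gives $g = \updO^t(g) \leq \updO^t(\bar f_1) = f_1(\cdot,t)$ for every $t \in \Nat$, so passing to the limit yields $g \leq f_1^*$. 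The argument for $f_2^*$ as the least fixed point of $\updT$ in $L_2$ is entirely dual, using the monotonically increasing sequence starting at the bottom element $\bar f_2$.

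The main obstacle I anticipate is technical rather than conceptual: ensuring that the non-smoothness of $\updO$ at ties among successors and at the clamping boundaries $0$ and $1$ does not obstruct monotonicity or continuity. Both concerns are dispatched by the observation that $\updO(f)(v)$ depends on $f$ only through the two scalar values $\max_{u \in \N(v)} f(u)$ and $\min_{u \in \N(v)} f(u)$, which are themselves continuous and monotone in $f$.
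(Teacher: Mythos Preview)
Your proposal is correct and follows essentially the same approach as the paper, which simply invokes Kleene's fixed point theorem in the sentence preceding the proposition without further detail. You have carefully supplied the ingredients the paper leaves implicit: restricting to the sub-posets $L_1$ and $L_2$, verifying monotonicity via the partial-derivative computation from Lem.~\ref{lem:monotonic}, establishing continuity so that the limit is a fixed point, and proving extremality by iterating from the top (resp.\ bottom) element.
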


The following example demonstrates that, unlike traditional bidding games, the fixed points of the functions $\updO$ and $\updT$ on $L$ may not be unique.

\begin{figure}
\begin{minipage}[c]{0.45\linewidth}
	\begin{tikzpicture}[node distance=0.6cm]
		\node[state]	(a)	at	(0,0)		{$a$};
		\node[state]	(b)	[right=of a]	{$b$};
		\node[state,accepting]	(c)	[left=of a]	{$c$};
		\node[state,label={right:$\begin{bmatrix}0\\ 5\end{bmatrix}$}]	(d)	[right=of b]	{$d$};
		
		\path[->]	(a)	edge	(b)
						edge[bend left]	(c)
					(b)	edge[bend right]			(c)
						edge[bend left]	(d)
					(c)	edge[loop left]()
					(d)	edge[bend left]	(b)
						edge[bend right,out=45,in=135]	(c);
					
	\end{tikzpicture}
	\caption{Non-unique fixed points of the threshold update functions.}
	\label{fig:example:non-unique fixed point}
\end{minipage}\hfill
\begin{minipage}[c]{0.45\linewidth}
\begin{tabular}{c| c c c c c}
	\backslashbox{$t$}{$v$} & $a$ & $b$ & $c$ & $d$ & $e$ \\ 
	\hline
	0 & 1 & 1 & 1 & 0 & 1\\
	1 & 1 & 1  & 0.5  & 0 & 1\\
	2 & 1 & 0.75  & 0.5  & 0 & 1\\
	3 & 0.625  & 0.75  & 0.5  & 0 & 1\\
	4 & 0.0625  & 0.5625  &  0.5 & 0 & 1\\
	5 & 0  & 0.28125  & 0.5  & 0 & 1\\
	\hline 
	$\geq 6$ & 0 & 0.25 & 0.5 & 0 & 1
\end{tabular}
	\caption{Finite-Horizon reachability thresholds for the game in Fig. \ref{fig:examples:a}. The number at row $i$ and column $v$ represents $\ThO^{\reach(d,i)}(v)$.}
	\label{table:example:thresholds}
\end{minipage}
\end{figure}

\begin{example}[Multiple fixed-points]
	Consider the bidding game in Fig.~\ref{fig:example:non-unique fixed point}, where the objective of \PO is to reach $c$.
	It can be easily verified that both ${f_1^*}=\set{a\mapsto 0.25, b\mapsto 0.5, c\mapsto 0, d\mapsto 1}$ and ${f_1^*}'\equiv 0$ are fixed points of the operator $\updO$ over $L$ in this case.	
	\qed
\end{example}

The following theorem establishes the existence of thresholds. 

\begin{restatable}{theorem}{thmRichmanReachComplThresh}\label{thm:richman:reach:complementary threshold}
	For every vertex $v \in V$, it holds that $\Th_1^{\reach(T)}(v)=f_1^*(v)$ and $\Th_2^{\reach(T)}(v) = f_2^*(v)$. 
Moreover, thresholds exist: $\Th_1^{\reach(T)}(v) + \Th_2^{\reach(T)}(v) = 1$.
\end{restatable}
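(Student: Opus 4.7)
The plan is to establish two one-sided inequalities, $\Th_1^{\reach(T)}(v)\le f_1^*(v)$ and $\Th_2^{\reach(T)}(v)\le f_2^*(v)$, and then squeeze them using the identity $f_1^*(v)+f_2^*(v)=1$ (obtained by passing to the limit $t\to\infty$ in the pointwise identity $f_1(v,t)+f_2(v,t)=1$ from Lem.~\ref{lem:soundness of time-varying threshold}) together with the generic bound $\Th_1^{\reach(T)}(v)+\Th_2^{\reach(T)}(v)\ge 1$. The latter bound holds because if $\Th_1^{\reach(T)}(v)+\Th_2^{\reach(T)}(v)<1$, then any $B_1$ in the nonempty open interval $(\Th_1^{\reach(T)}(v),\,1-\Th_2^{\reach(T)}(v))$ would give both \PO and \PT simultaneous winning strategies from $\zug{v,B_1}$ for complementary objectives, which is impossible.

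For the \PO upper bound I would reduce directly to the bounded-horizon case. Given $B_1>f_1^*(v)$, monotonicity (Lem.~\ref{lem:monotonic}) and the defining limit of $f_1^*$ produce some finite horizon $t$ with $f_1(v,t)<B_1$. Lem.~\ref{lem:soundness of time-varying threshold} then supplies a \PO strategy winning $\reach(T,t)$ from $\zug{v,B_1}$, and any play that reaches $T$ within $t$ steps also fulfills the unbounded objective $\reach(T)$. This yields $\Th_1^{\reach(T)}(v)\le f_1^*(v)$.

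The \PT upper bound cannot be obtained by the same shortcut, since a family of finite-horizon safety strategies---one per $t$---does not automatically combine into a single infinite-horizon strategy. Instead, I would exploit the fixed-point identity $f_2^*=\updT(f_2^*)$ from Prop.~\ref{prop:richman:reach:fixpoint} to lift the one-step argument of Lem.~\ref{lem:soundness of time-varying threshold} to a \emph{memoryless} strategy that preserves a single invariant forever. Concretely, when the token is at $v$ with \PT's budget $B_2>f_2^*(v)$, she issues exactly the bid used in the bounded-horizon proof but with $f_2^*$ replacing $f_2(\cdot,t-1)$; the identical algebra---with the fixed-point equation standing in for the recursion of Def.~\ref{def:richman-reach-th}---then shows that after charging, bidding, and moving, the successor configuration $\zug{v',B_2'}$ again satisfies $B_2'>f_2^*(v')$. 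Since $B_2>f_2^*(v)$ forces $f_2^*(v)<1$, and $v\in T$ implies $f_2^*(v)=1$, the invariant also ensures that $T$ is avoided, so the play witnesses $\safe(\compl T)$. Combining the two upper bounds with the sandwiching identities from the first paragraph collapses every inequality into an equality, yielding $\Th_i^{\reach(T)}(v)=f_i^*(v)$ for $i\in\set{1,2}$ and $\Th_1^{\reach(T)}(v)+\Th_2^{\reach(T)}(v)=1$.

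The step I expect to be the main obstacle is verifying the \PT invariant-preservation calculation in the presence of the $\clamp{\cdot}$ truncation inside $\updT$: at vertices where the clamp saturates, the fixed-point equation for $f_2^*$ need not match the unclamped algebraic identity that drove the bounded-case proof. I would discharge this by a case split on whether $f_2^*(v)\in\set{0,1}$: the case $f_2^*(v)=1$ is vacuous for the invariant (no legal $B_2$ exists), while the case $f_2^*(v)=0$ reduces to showing that \PT can bid $0$ and still keep her budget above $f_2^*(v')$ at the successor---a short direct check against the taxman update formula.
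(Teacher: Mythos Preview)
Your plan matches the paper's proof: the \PO inequality via reduction to a finite horizon (Lem.~\ref{lem:monotonic} plus Lem.~\ref{lem:soundness of time-varying threshold}); the \PT inequality via a memoryless strategy built on the fixed-point identity $f_2^*=\updT(f_2^*)$, maintaining the invariant $B_2>f_2^*(\cdot)$; and the squeeze via $f_1^*+f_2^*=1$. The paper leaves the generic bound $\Th_1^{\reach(T)}+\Th_2^{\reach(T)}\ge 1$ implicit, whereas you spell it out---that is fine.

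The one deviation is your speculated handling of the clamp, and there the ``bid~$0$'' fallback for $f_2^*(v)=0$ does \emph{not} preserve the invariant. In Richman, if \PO also bids $0$ (ties go to \PO) and moves to $v^+$, \PT's post-charge budget $B_2''$ is only guaranteed to exceed $\tfrac{f_2^*(v^+)+f_2^*(v^-)}{2}$, which can be strictly below $f_2^*(v^+)$ whenever $f_2^*(v^-)<f_2^*(v^+)$. The paper's resolution is simpler and needs no case split on $f_2^*(v)=0$: once $f_2^*(v)=1$ is excluded (vacuous under $B_2>f_2^*(v)$), one has $f_2^*(v)\ge x$ with $x$ the unclamped expression in $\updT$, and then the \emph{standard} bid $b_2=\tfrac{f_2^*(v^+)-f_2^*(v^-)}{2}$ (Richman; the analogous taxman bid in general) is both affordable and invariant-preserving by exactly the algebra of Lem.~\ref{lem:soundness of time-varying threshold}, regardless of whether $f_2^*(v)=0$ or $f_2^*(v)\in(0,1)$.
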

\begin{proof}
We prove that $f_1^*(v)\geq \Th_1^{\reach(T)}(v)$, for every $v \in \V$. Let $B_1 > f_1^*(v)$. There exists a $t \in \Nat$ such that $B_1 > f_1(v, t)$. 
\PO uses the strategy from Lem.~\ref{lem:soundness of time-varying threshold}, guaranteeing that $T$ is reached within $t$ steps. 

Next, we prove that  $f_2^*(v)\geq \Th_2^{\reach(T)}(v)$, for every $v \in \V$. We provide separate proofs for the Richman and taxman bidding mechanisms: the former is for easier intuitive explanation and the latter is for completeness.

\smallskip
\noindent\textbf{Proof for Richman Bidding. }
Let $B_2 > f_2^*(v)$ and we describe a  \PT strategy that wins for the objective $\safe(\compl{T})$. 
We know that $f_2^* = \updT(f_2^*)$ (consequence of Prop.~\ref{prop:richman:reach:fixpoint}), and let $v^+,v^-$ be the successors of $v$ with the greatest and the least value of $f_2$.
\PT bids $\bt = \frac{f_2^*(v^+)-f_2^*(v^-)}{2}$, and upon winning proceeds to $v^-$.
We prove the invariant that regardless of whether \PT wins or loses the bidding, in the next vertex $v'$ her new budget $B_2'>f_2^*(v')$.
Since, by construction, $f_2^*(v'')=1$ for every $v''\in T$, hence it must be true that $T$ is never reached as \PT's budget can never exceed $1$.

Before we prove the invariant condition, observe that $f_2^*(v)$ can be either $0$, $1$, or $x = \frac{f_2^*(v^+)+f_2^*(v^-)}{2}\left( 1+R_1(v)+R_2(v)\right) -R_1(v)$.
Since $\BT\in [0,1]$ and $\BT>f_2^*(v)$, we can rule out the case $f_2^*(v)=1$.
Therefore $f_2^*(v)\geq x$.

To show the invariant, first, assume that \PT wins the bidding and proceeds to $v^-$.
Her new budget at $v^-$ becomes:
\begin{align*}
	\BT' = \frac{\BT+\RT(v)}{1+\RO(v)+\RT(v)} - \bt 
	&> \frac{f_2^*(v) + \RT(v)}{1+\RO(v)+\RT(v)} - \bt\\
	&\geq \frac{x + \RT(v)}{1+\RO(v)+\RT(v)} - \bt\\
	&= \frac{f_2^*(v^+)+f_2^*(v^-)}{2} \\
	&\quad -\frac{f_2^*(v^+)-f_2^*(v^-)}{2} \\ &= f_2^*(v^-).
\end{align*}
Next, assume that \PT loses the bidding and \PO moves the token to some $v'\in\N(v)$.
By definition, $f_2^*(v')\leq f_2^*(v^+)$.
The new budget of \PT at $v'$ becomes:
\begin{align*}
	\BT' = \frac{\BT+\RT(v)}{1+\RO(v)+\RT(v)} + & \bt > \frac{f_2^*(v) + \RT(v)}{1+\RO(v)+\RT(v)} + \bt\\
	&\geq \frac{x + \RT(v)}{1+\RO(v)+\RT(v)} + \bt\\
	&= \frac{f_2^*(v^+)+f_2^*(v^-)}{2} + \frac{f_2^*(v^+)-f_2^*(v^-)}{2}\\
	&= f_2^*(v^+)\geq f_2^*(v').
\end{align*}%

\smallskip
\noindent\textbf{Proof for taxman Bidding.} Suppose the token is in node $v$ and \PT has budget $B_2 > f_2^*(v)$. It immediately follows that $f_2^*(v) <1$. She can make a bid equal to $b=\frac{f_2(v^+)-f_2(v^-)}{[f_2(v^+)-f_2(v^-)-1]\tau +2}$. Similar to statements in the proof of Lem. \ref{lem:soundness of time-varying threshold}, it can be shown that in case of winning the bid, her budget will be strictly greater than $f_2^*(v^-)$ and in case of losing, strictly greater than $f_2^*(v^+)$. Therefore, by playing action $\tup{v^-,b}$, she is guaranteed to have move than $f_2^*(u)$ for any successor $u$ of $v$ that is reached next. It then follows that for any such $u$, $f_2^*(u)<1$, therefore $u \notin T$. Thus, $T$ is never reached since \PT's budget can never exceed $1$. Finally, the other directions, i.e., the inequalities $f_1^*(v)\leq \Th_1^{\reach(T)}(v)$ and $f_2^*(v)\leq \Th_2^{\reach(T)}(v)$, and the existence claim follows from the observation that $f_1^*(v) = 1-f_2^*(v)$, for every $v \in \V$.
\end{proof}

It follows that the threshold can be computed using fixed point iterations sketched in Prop.~\ref{prop:richman:reach:fixpoint}; this iterative approach is illustrated in the following example.

\begin{example}
	Consider the bidding game in Fig. \ref{fig:examples:a} with Richman bidding. The finite horizon reachability thresholds are depicted in table \ref{table:example:thresholds}. Suppose the game starts from $\tup{a,0.1}$ which is winning for the reachability player. 
	In this case, $t=4$ is the smallest integer for which \PO's budget $0.1$ at $a$ is larger than $\ThO^{\reach(d,t)}(a)$, which is $0.0625$.	
	Therefore, according to the strategy of \PO as described in the proof of Lem.~\ref{lem:soundness of time-varying threshold}, \PO has a strategy for reaching $d$ in $4$ steps.
	 First, his budget is charged to $\frac{0.1+2}{3}=0.7$. His winning strategy (described in the proof of Lem.~\ref{lem:soundness of time-varying threshold}) dictates that he should bid $\frac{\ThO^{\reach(d,3)}(b)-\ThO^{\reach(d,3)}(a)}{2} = 0.0625$. In case of winning, he pays the bid to the safety player and keeps the token at $a$, with budget $0.6375$ which is then charged to $0.8791\bar{6}$. This is enough for winning 3 consecutive biddings and moving the token to $d$. In case of losing the first bid, his budget will increase to at least $0.76$. This amount is more than both $\ThO^{\reach(d,3)}(a)$ and $\ThO^{\reach(d,3)}(b)$, therefore he can guarantee a win in at most 3 steps. 
	
	Now suppose the game starts from $\tup{b,0.2}$ meaning that the safety player has $0.8$ budget and can win the game. She has a budget-agnostic strategy which dictates her to bid $0.25$ in $b$ (see the proof of Lem.~\ref{lem:soundness of time-varying threshold} for a sketch of \PT's strategy). She definitely wins this bidding as her opponent has only $0.2$ budget. She then moves the token to $c$ and pays $0.25$ to the reachability player, leaving her with $0.55$ of the total budget. She can then bid $0.5$, win the bidding and move the token to $e$, where the token stays indefinitely. \qed
\end{example}

\subsection{Complexity Bounds (for Reachability and Safety Objectives)}

Since $f_1^*$ and $f_2^*$ are fixed points of the operators $\updO$ and $\updT$, respectively, hence $f_1^* = \updO(f_1^*)$ and $f_2^*=\updT(f_2^*)$.
Moreover, $f_1^*$ is the greatest fixed point, which means that $\ThO^{\reach(T)}=f_1^*$ can be computed by finding the element-wise maximum function $h$ in $[0,1]^\V$ that satisfies $h(v)=0$ for $v\in T$ and $h(v) =\updO(h)(v)$ for $v\notin T$.
This is formalized below:
\begin{align}\label{eq:MILP for richman reachability}
	\begin{split}
    &\qquad\max_h \sum_{v \in V} h(v) \\
    &\text{subjected to constraints:}\\
     &\qquad\forall v\in T\;.\; h(v)= 0, \\
     &\qquad\forall v\notin T\;.\; h(v) = \updO\left(h(\cdot)\right)(v) \\
     &\qquad\qquad= \clamp{\frac{(1-\tau)h(v^-)+h(v^+)}{[h(v^+)-h(v^-)-1]\tau + 2}\cdot
(1+\RO(v)+\RT(v)) - \RO(v)},\\
     &\qquad h(v^+) = \max_{u \in \N(v)} h(u),\quad h(v^-) = \max_{u \in \N(v)} h(u).
     \end{split}
\end{align}

\begin{restatable}{proposition}{propRichmanReachSoundnessMILP}\label{prop:richman:reach:sound milp}
	The solution of the optimization problem in \eqref{eq:MILP for richman reachability} is equivalent to the threshold function $\ThO^{\reach(T)}$ of the reachability player.
\end{restatable}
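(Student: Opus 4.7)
The plan is to reduce the claim to the fixed-point characterization already established in Prop.~\ref{prop:richman:reach:fixpoint} and Thm.~\ref{thm:richman:reach:complementary threshold}. Specifically, by Thm.~\ref{thm:richman:reach:complementary threshold} we have $\Th_1^{\reach(T)} = f_1^*$, and by Prop.~\ref{prop:richman:reach:fixpoint} the function $f_1^*$ is the greatest element of the set
\[
\mathcal{F} \;=\; \bigl\{\, h \in [0,1]^\V \;:\; h(v) = 0 \text{ for all } v\in T,\; h(v) = \updO(h)(v) \text{ for all } v\notin T \,\bigr\},
\]
where ``greatest'' is with respect to the pointwise order on $[0,1]^\V$. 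Notice that $\mathcal{F}$ is exactly the feasible region of the optimization problem~\eqref{eq:MILP for richman reachability}: the non-linear constraints $h(v^+) = \max_{u\in \N(v)} h(u)$ and $h(v^-) = \min_{u\in \N(v)} h(u)$ simply spell out the definition of $\updO$. Hence it suffices to prove that (i) $f_1^*$ is feasible for~\eqref{eq:MILP for richman reachability}, and (ii) $f_1^*$ uniquely maximizes the objective $\sum_{v\in V} h(v)$ over $\mathcal{F}$.

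For (i), feasibility is immediate from Prop.~\ref{prop:richman:reach:fixpoint}: $f_1^*$ lies in $[0,1]^\V$, vanishes on $T$ by construction (Def.~\ref{def:richman-reach-th}), and satisfies $f_1^* = \updO(f_1^*)$ since it is a fixed point of $\updO$. For (ii), let $h\in\mathcal{F}$ be any feasible solution. Since $f_1^*$ is the pointwise-greatest element of $\mathcal{F}$, we get $h(v)\leq f_1^*(v)$ for all $v\in \V$, and summing over $\V$ gives $\sum_v h(v) \leq \sum_v f_1^*(v)$. Therefore $f_1^*$ achieves the maximum, and equality in the sum together with the pointwise inequalities forces $h(v)=f_1^*(v)$ for every $v$, establishing uniqueness of the optimizer.

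The one subtle point I expect to articulate carefully is the equivalence between the constraint ``$h(v) = \updO(h)(v)$'' written via $v^+, v^-$ in~\eqref{eq:MILP for richman reachability} and the operator $\updO$ defined in Def.~\ref{def:taxman-upd}: in~\eqref{eq:MILP for richman reachability} the symbols $v^+$ and $v^-$ are locally defined as $\arg\max$ and $\arg\min$ of $h$ over $\N(v)$ (we will note the apparent typo in the ``$v^-$'' constraint, which must read $\min_{u\in \N(v)} h(u)$ to match Def.~\ref{def:taxman-upd}). Once this identification is made, feasibility and optimality arguments above are immediate, and no additional machinery beyond the greatest-fixed-point property is required.
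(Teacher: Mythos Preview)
Your proposal is correct and follows essentially the same approach as the paper: both argue that the feasible region of~\eqref{eq:MILP for richman reachability} is exactly the set of fixed points of $\updO$ (subject to $h|_T\equiv 0$), and that the pointwise-greatest fixed point $f_1^*$ automatically maximizes the sum. Your write-up is simply more detailed, spelling out feasibility, optimality, and uniqueness separately; you also correctly spot the typo in the $v^-$ constraint (it should read $\min$, not $\max$), which the paper's short proof glosses over.
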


\begin{proof}
	Prop. \ref{prop:richman:reach:fixpoint} shows that $f_1^*$ is the greatest fixed point of $\updO$. The optimization problem in \eqref{eq:MILP for richman reachability} searches for a fixed point of $\updO$ with maximum sum of values. This trivially coincides with the greatest fixed point of $\updO$.
\end{proof}

Due to Thm.~\ref{thm:richman:reach:complementary threshold}, for every vertex $v\in \V$, we have $\ThT^{\reach(T)}(v)=1-\ThO^{\reach(T)}(v)$.
Consequently, we obtain the following upper complexity bounds.

\begin{restatable}{theorem}{ThmReachTaxmanPspace} \label{thm:reach:taxman:pspace}
	The following hold:
	\begin{enumerate}[(i)]
		\item $\THRESH^{\mathrm{taxman}}_{\reach}\in \textup{PSPACE}$ and $\THRESH^{\mathrm{taxman}}_{\safe}\in \textup{PSPACE}$, \label{complexity:taxman:reach}
		\item $\THRESH^{\mathrm{Richman}}_{\reach}\in \textup{coNP}$ and $\THRESH^{\mathrm{Richman}}_{\safe}\in \textup{NP}$. \label{complexity:richman:reach}
	\end{enumerate}
\end{restatable}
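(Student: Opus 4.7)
The plan is to reduce each of the four queries to a fixed-point existence question over the operator $\updO$ or $\updT$, and then to bound the complexity of deciding this question under Richman and taxman bidding separately.

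First I would combine Thm.~\ref{thm:richman:reach:complementary threshold} with its symmetric counterpart for \PT to obtain two fixed-point characterizations. For reachability, $\ThO^{\reach(T)}(v)=f_1^*(v)$ is the \emph{greatest} fixed point of $\updO$ in $[0,1]^{\V}$ subject to $h\equiv 0$ on $T$. For safety, applying the same argument to \PT (whose role in the $\safe(S)$ game is that of the reaching player trying to hit $\compl{S}$) and then using $\ThO+\ThT=1$, one gets $\ThO^{\safe(S)}(v)=1-g^*(v)$, where $g^*$ is the \emph{greatest} fixed point of $\updT$ in $[0,1]^{\V}$ subject to $g\equiv 0$ on $\compl{S}$. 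By extremality of the greatest fixed point, $\ThO^{\reach(T)}(v)>\tfrac{1}{2}$ iff some fixed point $h$ of $\updO$ satisfying the boundary has $h(v)>\tfrac{1}{2}$, and $\ThO^{\safe(S)}(v)\leq\tfrac{1}{2}$ iff some fixed point $g$ of $\updT$ satisfying the boundary has $g(v)\geq\tfrac{1}{2}$.

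For Richman bidding ($\tau=0$), such a fixed point can be verified non-deterministically. For every vertex $v$ outside the boundary, guess (i) designated successors $v^+$ and $v^-$ in $\N(v)$, and (ii) which branch of $\clamp{\cdot}$ is active (equal to $0$, interior affine, or equal to $1$). Under this guess each $\updI$ equation becomes affine in $\{h(u)\}_{u\in V}$, so the fixed-point system collapses to a polynomial-size linear program solvable in P with rational output of polynomial bit-length. A subsequent P-time check verifies that the guessed $v^+$ indeed maximizes $h$ over $\N(v)$ (similarly for $v^-$) and that the raw value of the update matches the guessed clamp branch. This places fixed-point existence in NP, yielding $\THRESH^{\mathrm{Richman}}_{\reach}\in\textup{coNP}$ (via negation) and $\THRESH^{\mathrm{Richman}}_{\safe}\in\textup{NP}$ directly.

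For taxman bidding with arbitrary $\tau\in[0,1]$, the denominator $[h(v^+)-h(v^-)-1]\tau+2$ is strictly positive on $[0,1]^{\V}$, so clearing denominators turns each fixed-point equation into a polynomial identity of degree at most two in the real variables $\{h(v)\}_{v\in V}$. I would express fixed-point existence together with the required inequality at $v$ as a sentence in the existential theory of the reals: the real variables $\{h(v)\}$, plus auxiliary Boolean case-splits for the successor choices and clamp branches at each vertex, are existentially quantified and constrained by polynomial (in)equalities. Canny's PSPACE algorithm for the existential theory of the reals then gives $\THRESH^{\mathrm{taxman}}_{\reach}\in\textup{PSPACE}$ and $\THRESH^{\mathrm{taxman}}_{\safe}\in\textup{PSPACE}$. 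The main obstacle throughout is the non-smoothness of $\updI$ coming from the $\max/\min$ over successors and the saturation $\clamp{\cdot}$; per-vertex combinatorial guessing cleanly removes it in both regimes, and a small technical point that I would also verify is that under any fixed Richman guess the LP witnesses have polynomial-size rational representations, ensuring the NP certificate is of polynomial length and not merely polynomial-time checkable.
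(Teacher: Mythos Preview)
Your proposal is correct and follows essentially the same approach as the paper: reduce both questions to existence of a fixed point of $\updO$ (resp.\ $\updT$) satisfying the boundary condition and an inequality at $v$, then decide this in PSPACE via the existential theory of the reals for taxman, and in NP via per-vertex guessing of $v^+,v^-$ and the $\clamp{\cdot}$ branch followed by a linear feasibility check for Richman. Your treatment of the safety case (passing to \PT's greatest fixed point of $\updT$ and using $\ThO+\ThT=1$) is spelled out more explicitly than in the paper, but it is the same duality the paper invokes in one line.
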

\begin{proof}
	\textbf{Proof of (i).}
	Consider the following system of inequalities: 
		\[
		\begin{split}
			&\big(\forall u \in T,~  h(u)=0 \big) \wedge A^\G_T \\
		\end{split}
		\]
	With $A^\G_T$ defined as in Equation (\ref{eq:definition:AGT}). 
	\begin{equation}	\label{eq:definition:AGT}
		\begin{split}
			A^\G_T \equiv& \Bigg( \forall u \in \V\setminus T\;.\\
			&\left[h(u)=0 \wedge \frac{(1-\tau)h(u^-)+h(u^+)}{[h(u^+) - h(u^-) -1] \tau +2}(1+\RO(u)+\RT(u))-\RO(u) \leq 0 \right] \\
			&\vee  \left[h(u)=1 \wedge \frac{(1-\tau)h(u^-)+h(u^+)}{[h(u^+) - h(u^-) -1] \tau +2}(1+\RO(u)+\RT(u))-\RO(u) \geq 1 \right] \\
			&\vee  \left[0 \leq h(u) \leq 1 \wedge  h(u) = \frac{(1-\tau)h(u^-)+h(u^+)}{[h(u^+) - h(u^-) -1] \tau +2}(1+\RO(u)+\RT(u))-\RO(u) \right] \Bigg)\\
			& \wedge \Bigg( \forall u \in V\;.\; \Big[ \bigvee_{u' \in \N(u)} h(u^+)=h(u') \wedge \bigvee_{u' \in \N(u)} h(u^-)=h(u') \wedge\\
			&\qquad\qquad\qquad \forall u'\in \N(u)\;.\; h(u^+)\geq h(u') \wedge h(u^-) \leq h(u') \Big] \Bigg) 
		\end{split}
	\end{equation}
	
	Any $h \in \Real^\V$ that satisfies the above formula is a fixed point of $Av_1$. Hence, if the system has a solution where $h(v)>0.5$, the greatest fixed point $\ThO^{\reach(T)}$ satisfies $\ThO^{\reach(T)}(v)>0.5$. This is an instance of existential theory of reals which is known to be in PSPACE. Therefore, it is possible to decide $\ThO^{\reach(T)}(v)>0.5$ (equivalently $\ThT^{\reach(T)}(v)<0.5$) in PSPACE.
	
	\smallskip
	\noindent\textbf{Proof of (ii).}
	We provide the following reduction from the optimization problem to an instance of MILP; a different proof with a polynomial certificate is provided next. Let $O$ be the optimization problem as stated in the Section \ref{sec:reachability} with $\tau=0$.
	
	Let $M$ be any constant strictly greater than $\max_{u \in V}\{1+R_1(u)+R_2(u)\}$. For each node $u$ define two new variables $h^-(u), h^+(u)$ and add the following constraints to $O$:
	\[
	\begin{split}
		h^+(w) \geq&~ h(w)~ \forall w \in \N(u) \\
		h^+(w) \leq&~ h(w) + (1-b_u^w)\cdot M~ \forall w \in \N(u)\\
		\sum_{w \in \N(u)} &~b_u^w = 1 \\
		b_u^w \in \{0,1\} &~\forall w \in \N(u) \\
		\\
		h^-(w) \leq &~ h(w)~ \forall w \in \N(u) \\
		h^-(w) \geq &~ h(w) - (1-c_u^w)\cdot M~ \forall w \in \N(u)\\
		\sum_{w \in \N(u)} &~ c_u^w = 1 \\
		c_u^w \in \{0,1\} &~\forall w \in \N(u) \\		
	\end{split}
	\]
	This guarantees that $h^+(u) = h(u^+)$ and $h^-(u) = h(u^-)$, so they can be replaced. Next, replace each $\min(1,x)$ by $\frac{(x-1)- |x-1|}{2}+1$ and $\max(0,x)$ by $\frac{x+|x|}{2}$. Then replace each $|y|$ with a fresh variable $a_y$ and add the following constraints to $O$:
	\begin{enumerate}
		\item	$-a_y \leq y \leq a_y$
		\item	$y + M \cdot z_y \geq a_y \wedge -y + M \cdot (1-z_y) \geq a_y \wedge z_y \in \{0,1\}$
	\end{enumerate}
	The first constraint ensures that $|y| \leq a_y$ and the second one that $|y|\geq a_y$. Therefore, it is guaranteed that $|y|=a_y$. The MILP instance $O$ is equivalent to the optimization problem in section \ref{sec:reachability}. In order to decide whether $\ThO(v) \geq 0.5$ it suffices to decide satisfiability of $O$ with the additional constraint that $h(v)\geq 0.5$ and this decision problem is known to be in NP.
	
	\smallskip	
	\noindent\textbf{Alternative Proof for (ii).}
	A non-deterministic algorithm for deciding whether $\ThO^{\reach(T)}(v)>0.5$ can be characterized as follows: 
	\begin{enumerate}
		\item Guess $Z,O \subseteq V$ such that $Z \cap O = \varnothing$. Also for each $u \in V$, guess $u^+, u^-$ from successors of $u$. 
		\item Solve the following system of linear constraints:
		\[
		\begin{split}
			& h(v)\geq 0.5 \\
			&\forall u \in T,~  h(u)=0 \\
			&\forall u \in Z,~  h(u)=0 \wedge \frac{h(u^-)+h(u^+)}{2}(1+\RO(u)+\RT(u))-\RO(u) \leq 0 \\
			&\forall u \in O,~  h(u)=1 \wedge \frac{h(u^-)+h(u^+)}{2}(1+\RO(u)+\RT(u))-\RO(u) \geq 1 \\
			&\forall u \in V\setminus(T \cup Z \cup O)~,  0 \leq h(u) \leq 1 \\ 
			&~~~~~~~~~~~~~~~~~~~~~ \wedge  h(u) = \frac{h(u^-)+h(u^+)}{2}(1+\RO(u)+\RT(u))-\RO(u)\\
			& \forall u \in V,\forall u'\in \N(u),~ h(u^+)\geq h(u') \wedge h(u^-) \leq h(u')
		\end{split}
		\]
	\end{enumerate}
	The second step of the algorithm can be done in polynomial time. Hence, this is an NP algorithm for deciding whether $\ThO(v)\geq 0.5$ for specified $v \in V$. Dually, it is a coNP algorithm for deciding whether $\ThT(v)<0.5$. 
\end{proof}

\section{B\"uchi Bidding Games with Charging}
\label{sec:buchi}
We proceed to B\"uchi objectives. We point out that the proof of existence of thresholds is significantly more involved than for B\"uchi games with no charging. The key distinction is that thresholds in traditional strongly-connected B\"uchi games are trivial: If even one of the vertices is a B\"uchi target vertex, the B\"uchi player's threshold in \emph{each} vertex is $0$ and otherwise is $1$~\cite{AHC17}. This property gives rise to a simple reduction from traditional B\"uchi bidding games to reachability bidding games. With charging, this property no longer holds. For example, alter the game in Fig.~\ref{fig:examples:c} to make it strongly-connected by adding an edge from $t$ to $b$. The thresholds remain above $0$, i.e., there are initial budgets with which \PT wins. 

Our existence proof, which is inspired by an existence proof for discrete-bidding games~\cite{AS22}, follows a fixed-point characterization that is based on solutions to {\em frugal-reachability} games, which are defined below. We note that the proof has a conceptual similarity with Zielonka's algorithm~\cite{Zie98} in turn-based B\"uchi games, which characterizes the set of winning vertices based on repeated calls to an algorithm for turn-based reachability games.

\subsection{Frugal-Reachability Objectives}
We introduce frugal reachability objectives.
Consider a taxman-bidding game with charging $\G = \tup{\V,\E,\RO,\RT}$. Let $T\subseteq V$ be a set of target vertices and $\f: T \rightarrow [0,1]$ be a function that assigns each target with a \emph{frugal budget}. The \emph{frugal reachability} objective $\frugalreach(T,\f)$ requires \PO to reach $T$ such that the first time a vertex $v\in T$ is reached, \PO's budget must exceed $\f(v)$, thus:
\[
\frugalreach(T,\f) \coloneqq \set{ \tup{v_0,\BO^0}\tup{v_1,\BO^1}\ldots \mid  \exists i\;.\; v_i\in T\land \BO^i > \f(v_i) \land \forall j<i\;.\; v_j \notin T}
\]
We stress that $\frugalreach(T,\f)$ is a set of \emph{plays}, whereas the other objectives we have considered so far (reachability, B\"uchi, etc.) were sets of \emph{paths}.

Existence of thresholds $\ThO^{\frugalreach(T,\f)}$ and $\ThT^{\frugalreach(T,\f)}$ for the frugal-reachability objective and its dual are shown in Theorem \ref{thm:frugalreach}. For $v \in T$ and $t \in \Nat$, recall that we define $f_1(v, t) = 0$ (Def.~\ref{def:richman-reach-th}), which intuitively means that \PO wins if he reaches $v$ with {\em any} budget. Instead, we now define $f_1(v,t)=\f(v)$, requiring \PO to reach $v$ with a budget of $\f(v)$. Dually, we define $f_2(v,t)=1-\f(v)$.

\begin{definition}
	Define the functions $f_1,f_2: V \times \Nat \to [0,1]$ inductively:
	\[
	\begin{split}
		\underline{\forall v\in T}:&\quad \text{ for all } t\in \Nat,  f_1(v,t) \coloneqq \f(v) \text{ and } f_2(v, t) \coloneqq 1-\f(v)\\
		\underline{\forall v\notin T}:&\quad f_1(v,0) \coloneqq 1 \text{ and }  f_2(v,0) \coloneqq 0\\
		&\text{for } t > 0 \quad f_1(v,t) \coloneqq \updO\left( f_1(\cdot,t-1) \right)(v) \text{ and }\\
		&\qquad\qquad f_2(v,t) \coloneqq \updT\left(f_2(\cdot,t-1)\right)(v).
	\end{split}
	\]
\end{definition}

\begin{restatable}{theorem}{ThmFrugReach}\label{thm:frugalreach}
The thresholds $\ThO^{\frugalreach(T,\f)}$ and $\ThT^{\frugalreach(T,\f)}$ exist. 
\end{restatable}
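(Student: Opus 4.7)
The plan is to adapt the argument of Section~\ref{sec:reachability} almost verbatim, with the only change being the boundary condition at target vertices. First I would prove a bounded-horizon analogue of Lemma~\ref{lem:soundness of time-varying threshold}: the $t$-step frugal-reachability threshold of \PO at $v$ equals $f_1(v,t)$, and the $t$-step threshold of \PT (to either stay outside $T$ for $t$ steps or, upon first entering $T$ at some $v'$, leave \PO with budget at most $\f(v')$) equals $f_2(v,t)$. The key identity $f_1(v,t)+f_2(v,t)=1$ holds by induction: at targets, $\f(v)+(1-\f(v))=1$ by construction, and for non-targets the arithmetic identity used in the proof of Lemma~\ref{lem:soundness of time-varying threshold} preserves the sum through $\updO$ and $\updT$. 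Monotonicity in $t$ (Lemma~\ref{lem:monotonic}) transfers in the same way: at targets, $f_i(v,\cdot)$ is constant; at non-targets the base cases $f_1(v,0)=1\geq f_1(v,1)$ and $f_2(v,0)=0\leq f_2(v,1)$ hold because $\updI$ ranges in $[0,1]$, and the inductive step reuses the monotonicity of the auxiliary function $P_\tau$. Hence the pointwise limits $f_1^*(v)=\lim_{t\to\infty}f_1(v,t)$ and $f_2^*(v)=\lim_{t\to\infty}f_2(v,t)$ exist and still satisfy $f_1^*+f_2^*=1$.

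For the upper bound $f_1^*(v)\geq\ThO^{\frugalreach(T,\f)}(v)$, given $B_1>f_1^*(v)$ I would pick $t$ with $B_1>f_1(v,t)$ and play the memoryless taxman strategy of Lemma~\ref{lem:soundness of time-varying threshold}, which maintains the invariant that the pre-charge budget $B_1^k$ exceeds $f_1(v_k,t-k)$ along the play. The novelty is at the first step $k$ for which $v_k\in T$: the invariant forces $B_1^k>f_1(v_k,t-k)=\f(v_k)$, which is exactly the frugal-reachability winning condition, so \PO wins. Dually, for $f_2^*(v)\geq\ThT^{\frugalreach(T,\f)}(v)$, \PT uses the fixed-point strategy from the proof of Theorem~\ref{thm:richman:reach:complementary threshold} to maintain $B_2^k>f_2^*(v_k)$ forever. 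If the resulting play never enters $T$, it is trivially outside $\frugalreach(T,\f)$; if it first enters $v_k\in T$, then $B_2^k>f_2^*(v_k)=1-\f(v_k)$, so $B_1^k<\f(v_k)$ and again the frugal condition is violated.

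The main obstacle is conceptual rather than technical: because the winning condition now depends on the budget at the target rather than on the path alone, one must check that the asymmetric ``base-case tweak'' (replacing $0$ with $\f(v)$ for \PO at targets, and $0$ with $1-\f(v)$ for \PT) propagates consistently through $\updI$ and does not destroy complementarity. The symmetric choices $\f(v)$ versus $1-\f(v)$ ensure this by construction, which is the one place the proof must deviate from Section~\ref{sec:reachability}. Once complementarity $f_1^*(v)+f_2^*(v)=1$ and both inequalities $f_i^*(v)\geq\ThI^{\frugalreach(T,\f)}(v)$ are established, existence of thresholds follows from the standard zero-sum argument: if $\ThO(v)+\ThT(v)<1$ there would exist a budget $B_1$ with $\ThO(v)<B_1<1-\ThT(v)$ from which both players win, a contradiction; hence $\ThO(v)+\ThT(v)=1$.
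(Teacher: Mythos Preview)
Your proposal is correct and follows essentially the same approach as the paper: both adapt the reachability argument of Section~\ref{sec:reachability} by changing only the target boundary values to $\f(v)$ and $1-\f(v)$, then re-derive complementarity, monotonicity, and the limit fixed points. The paper's proof is a terse six-item checklist pointing back to Lemmas~\ref{lem:soundness of time-varying threshold} and~\ref{lem:monotonic} and Theorem~\ref{thm:richman:reach:complementary threshold}, while you spell out the same steps (including the one genuine new observation, that at the first hit of $T$ the maintained invariant yields $B_1>\f(v)$ for \PO and $B_1<\f(v)$ for \PT), so the two arguments coincide.
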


\begin{proof}
	\begin{enumerate}
		\item If \PO has more than $f_1(v,t)$ budget when the token is in $v$, he can win the $\frugalreach(T,\f)$ objective in at most $t$ steps. The proof is inductive and similar to the proof of Lem. \ref{lem:soundness of time-varying threshold}. The only difference is in the base of the induction, where for $t=0$ and $v \in T$, it is required that $f_1(v,t)=\f(v)$. 
		\item If \PT has more than $f_2(v,t)$ budget when the token is in $v$, she can prevent \PO from winning for at least $t$ steps. The proof of this statement is also similar to that of Lem. \ref{lem:soundness of time-varying threshold}. 
		\item For every $v,t$, it holds that $f_1(v,t)+f_2(v,t)=1$. 
		\item For every $v \in V$ and $t'>t$, it holds that $f_1(v,t') \leq f_1(v,t)$ and $f_2(v,t') \geq f_2(v,t)$. The proof is exactly the same as the of Lem. \ref{lem:monotonic}.
		\item The limit functions $f_1^*(v) = \lim_{t\to\infty} f_1(v,t)$ and $f_2^*(v) = \lim_{t\to\infty} f_2(v,t)$ are well-defined. Moreover, $f_1^*(v)$ is the greatest fixed point of $\updO$ subjected to constraints $f_1^*(v)=\f(v)$ for $v \in T$. 
		\item For every $v \in V$, it holds that $\ThO^{\frugalreach(T,\f)}(v) = f_1^*(v)$ and $\ThT^{\frugalreach(T,\f)}(v) = f_2^*(v)$.
	\end{enumerate}
\end{proof}

\subsection{Bounded-Visit B\"uchi and Co-B\"uchi}

We first prove the existence of thresholds for the simpler case of bounded-visit B\"uchi and co-B\"uchi objectives, where we impose, respectively, lower and upper bounds on the number of visits to the B\"uchi target vertices $B\subseteq \V$.
Let $k\in\mathbb{N}$ be a given bound.
The bounded-visit B\"uchi, denoted as $\buchi(B,k)$, intuitively requires \PO to visit $B$ at least $k$ times.
Formally, $\buchi(B,k)\coloneqq \set{v_0 v_1\ldots \mid \left|\set{i\in\mathbb{N}\mid v_i \in B}\right|\geq k}$.
Bounded-visit co-B\"uchi is the dual objective $\cobuchi(\compl{B},k)\coloneqq \set{v_0 v_1\ldots \mid \left|\set{i\in\mathbb{N}\mid v_i\in B}\right| < k} = \V^\omega\setminus \buchi(B,k)$.

Like before, we introduce two functions $g_1$ and $g_2$, which will be shown to characterize the thresholds for $\buchi(B,k)$ and $\cobuchi(B,k)$, respectively.

\begin{definition}
	\label{def:richman-buchi-th}
	Define the functions $\bthO,\bthT\colon \V\times \mathbb{N}\to [0,1]$ inductively as follows. 
	For every $v\in \V$, define $\bthO(v, 0) = 0$.
	For every $v \in B$, define $\bthO(v,1) \coloneqq 0$ and $\bthO(v,k) \coloneqq\updO(\bthO(\cdot,k-1))(v)$ for $k>1$, and for every $v \notin B$ and every $k > 0$, define $\bthO(v,k) \coloneqq \ThO^{\frugalreach(B,\bthO(\cdot,k))}(v)$. We proceed to define $\bthT$. 
	For every $v\in \V$, define $\bthT(v,0) \coloneqq 1$. For every $v\in B$, define $\bthT(v,1) \coloneqq 1$ and $\bthT(v,k) \coloneqq \updT(\bthT(\cdot,k-1))(v)$, for $k>1$, and for every $v \notin B$ and every $k > 0$, define $\bthT(v,k) \coloneqq \ThT^{\frugalreach(B,1-\bthT(\cdot,k))}(v)$.
\end{definition}

We prove the existence of thresholds and their correspondence to $g_1$ and $g_2$.

\begin{restatable}{lemma}{LemBuchiSoundTimeVaryingThresh}\label{lem:buchi:soundness of time-varying threshold}
For every $v \in V$ and $k \geq 0$, we have $g_1(v,k) = \Th_1^{\buchi(B, k)}(v)$ and $g_2(v,k) = \Th_2^{\buchi(B, k)}(v)$. Moreover, the thresholds exist: $\Th_1^{\buchi(B, k)}(v) + \Th_2^{\buchi(B, k)}(v) = 1$.
\end{restatable}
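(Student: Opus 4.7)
The plan is to establish the two equalities together with the identity $g_1(v,k)+g_2(v,k)=1$ simultaneously, by induction on $k$. The base case $k=0$ is immediate: $\buchi(B,0)=V^\omega$ so \PO trivially wins from every configuration, giving $\Th_1^{\buchi(B,0)}\equiv 0=g_1(\cdot,0)$, while \PT can never win $\compl{\buchi(B,0)}=\emptyset$ and the infimum-over-empty convention yields $\Th_2^{\buchi(B,0)}\equiv 1=g_2(\cdot,0)$.

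For the inductive step, fix $k\geq 1$ and split on whether the current vertex lies in $B$. If $v\in B$, the first visit has already occurred, so achieving $\buchi(B,k)$ from $v$ reduces, after one bidding step, to achieving $\buchi(B,k-1)$ from whichever successor is reached. For $k=1$ this is trivial and gives $g_1(v,1)=0$, $g_2(v,1)=1$, which match the thresholds. For $k>1$, by induction $g_1(\cdot,k-1)=\Th_1^{\buchi(B,k-1)}$ and $g_2(\cdot,k-1)=\Th_2^{\buchi(B,k-1)}$ with pointwise sum $1$. Replaying the bidding-step analysis from the proof of Lem.~\ref{lem:soundness of time-varying threshold} verbatim, with $g_1(\cdot,k-1)$ in place of $f_1(\cdot,t-1)$ (and dually for \PT), shows that from $\zug{v,\updO(g_1(\cdot,k-1))(v)+\epsilon}$ \PO can bid so that, whichever way the bid resolves, the successor $v'$ has his budget exceeding $g_1(v',k-1)$; the IH then supplies the remainder of the strategy, and the dual argument settles \PT. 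The arithmetic identity $\updO(f)+\updT(1-f)\equiv 1$ extracted in the proof of Lem.~\ref{lem:soundness of time-varying threshold} gives $g_1(v,k)+g_2(v,k)=1$.

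If $v\notin B$, the $k$-th visit requires first reaching some $u\in B$ with enough residual budget to then achieve $\buchi(B,k)$ from $u$ itself. By the $v\in B$ case already settled at this same $k$, the precise budget required at each $u\in B$ is $g_1(u,k)$, so a \PO strategy for $\buchi(B,k)$ from $v$ is exactly a winning strategy for $\frugalreach(B,g_1(\cdot,k))$. Thm.~\ref{thm:frugalreach} then yields $\Th_1^{\buchi(B,k)}(v)=\Th_1^{\frugalreach(B,g_1(\cdot,k))}(v)=g_1(v,k)$. For \PT, her complementary winning condition is either to keep the play in $\compl{B}$ forever or, at the first visit to some $u\in B$, to force \PO's budget to at most $g_1(u,k)$ and then prevent the remaining visits; this is precisely the dual of $\frugalreach(B,g_1(\cdot,k))$. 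Using $1-g_1(u,k)=g_2(u,k)$ on $B$ (from the $v\in B$ case just handled at this $k$), this dual threshold equals $\Th_2^{\frugalreach(B,1-g_2(\cdot,k))}(v)=g_2(v,k)$, and $g_1(v,k)+g_2(v,k)=1$ follows from the existence of frugal-reach thresholds in Thm.~\ref{thm:frugalreach}.

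The main obstacle is the order of operations within a single inductive index: the $v\notin B$ subcase invokes Thm.~\ref{thm:frugalreach} with the frugal function $g_1(\cdot,k)$, whose values on $B$ are themselves defined via the $\updO$ recursion on the already-settled index $k-1$. Processing vertices in $B$ before those outside $B$ at level $k$ is therefore essential, and it also supplies the identity $g_1(u,k)+g_2(u,k)=1$ on $B$ that bridges the two equivalent phrasings of the dual frugal-reach threshold (one via $g_1(\cdot,k)$, one via $1-g_2(\cdot,k)$) used by Def.~\ref{def:richman-buchi-th}.
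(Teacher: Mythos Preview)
Your proposal is correct and follows essentially the same approach as the paper: induction on $k$, splitting on $v\in B$ versus $v\notin B$, reducing the former to the one-step bidding analysis of Lem.~\ref{lem:soundness of time-varying threshold} applied to $g_i(\cdot,k-1)$, and reducing the latter to frugal reachability via Thm.~\ref{thm:frugalreach}. You are in fact somewhat more explicit than the paper about the within-level ordering (settle $B$-vertices at index $k$ first, then use $g_1(\cdot,k)|_B$ as the frugal target and the identity $g_1+g_2=1$ on $B$ for the dual), a point the paper handles only implicitly when it invokes the ``induction hypothesis'' to equate the two frugal functions $\f_1$ and $\f_2$.
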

\begin{proof}
	We prove the statement using induction on $k$. For $k=0$ the statement holds trivially. Now suppose the token is in node $v$ and \PO has budget $B_1>g_1(v,k)$ for $k>1$: 
	\begin{itemize}
		\item If $v \in B$, then he can bid $b=\frac{g_1(v^+,k-1)-g_1(v^-,k-1)}{[g_1(v^+,k-1)-g_1(v^-,k-1)-1]\tau +2}$ and act as $\tup{v^-,b}$. Similar arguments as in the proof of Lem. \ref{lem:soundness of time-varying threshold} show that he will have budget more than $g_1(v^-,k-1)$ in case of winning and more than $g_1(v^+,k-1)$ in case of losing the bid. In any case, by induction hypothesis, he has a strategy to make the token visit $B$ at least $k-1$ more times. 
		\item If $v \notin B$, then his budget $B_1$ is strictly more than $\ThO^{\frugalreach(B,\f)}(v)$ where $\f: b \mapsto g_1(b,k)$. Therefore, he can use the strategy induced by the frugal reachability objective to ensure that $B$ is reached at least once, say in node $b \in B$, while his budget is more than $g_1(b,k)$. From the previous bullet-point it follows that he can then visit $B$ at least $k$ times from $b$. 
	\end{itemize}
	
	For \PT, suppose the token is in node $v$ and she has budget $B_2>g_2(v,k)$. Note that for $k=1$, the value of $g_2(v,k)$ coincides with $\ThT^{\reach(B)}(v)$. Therefore, as basis of the induction, if $B_2>g_2(v,1)$, she can prevent the token from ever reaching $B$. For $k>1$ there are two possible cases:
	\begin{itemize}
		\item If $v \in B$, she can bid $b=\frac{g_2(v^+,k-1)-g_2(v^-,k-1)}{[g_2(v^+,k-1)-g_2(v^-,k-1)-1]\tau +2}$ to ensure that she will have budget more than $g_2(u,k-1)$ at the next step. It follows by the induction hypothesis, that she can then limit the number of visits of $B$ to be less than $k-1$. 
		\item If $v \notin B$, she can play the \emph{frugal safety} strategy induced by $\ThT^{\frugalreach(B,\f)}$ where $\f: b \mapsto 1-g_2(b,k)$. It is therefore guaranteed that either (i) $B$ will not be reached at all, or (ii) If $B$ is reached at $b \in B$, then her budget will be at least $g_2(b,k)$, hence she can limit the number of visits of $B$ from then onward to be at most $k$. 
	\end{itemize}
	
	We also use induction on $k$ to show that $g_1(v,k)+g_2(v,k)=1$. For $k=0,1$ the statement holds trivially. For $k>1$, there are two cases:
	\begin{itemize}
		\item If $v \in B$, then similar to the proof of Lem. \ref{lem:soundness of time-varying threshold} it can be proved that $g_1(v,k)+g_2(v,k)=1$. 
		\item If $v \notin B$, then $g_1(v,k) = \ThO^{\frugalreach(B,\f_1)}(v)$ and $g_2(v,k) = \ThT^{\frugalreach(B,\f_2)}(v)$ where $\f_1: b \mapsto g_1(b,k)$ and $\f_2: b \mapsto 1-g_2(b,k)$. It follows from induction hypothesis that $\f_1=\f_2$. Then it follows from Thm. \ref{thm:frugalreach} that $g_1(b,k)+g_2(g,k)=1$.
	\end{itemize}
The proof then follows from the induction hypothesis. 
\end{proof}

We establish monotonicity of the thresholds, which is proved by the fact that $\PO$ needs higher budget for forcing larger numbers of visits to $B$.

\begin{lemma}\label{lem:bounded buchi operator:monotonicity}
For $v \in \V$ and $k \in \Nat$, we have $\Th_1^{\buchi(B, k)}(v) \leq \Th_1^{\buchi(B, k+1)}(v)$ and $\Th_2^{\buchi(B, k)}(v) \geq \Th_2^{\buchi(B, k+1)}(v)$. Moreover, the thresholds are bounded by $0$ and $1$. 
\end{lemma}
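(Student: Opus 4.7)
The plan is to derive both monotonicity statements from the obvious set inclusion $\buchi(B, k+1) \subseteq \buchi(B, k)$ on infinite paths, combined with the determinacy that was just established in Lem.~\ref{lem:buchi:soundness of time-varying threshold}.

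First, I would observe that any infinite path visiting $B$ at least $k+1$ times visits $B$ at least $k$ times, so as subsets of $\V^\omega$ we have $\buchi(B,k+1) \subseteq \buchi(B,k)$. Since the path induced by a play $\play(c, \strat_1, \strat_2)$ depends only on the configurations traversed and not on the objective under consideration, any Player 1 strategy $\stratO$ that is winning from $\tup{v, B_1 + \epsilon}$ for $\buchi(B,k+1)$ against every Player 2 strategy is also winning from the same configuration for $\buchi(B,k)$. Taking infima over $B_1$ in the definition of $\ThO$ then yields
\[
\Th_1^{\buchi(B, k)}(v) \;\leq\; \Th_1^{\buchi(B, k+1)}(v).
\]

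For the Player 2 inequality, I would invoke the existence part of Lem.~\ref{lem:buchi:soundness of time-varying threshold}, which gives the complementarity $\Th_1^{\buchi(B, k)}(v) + \Th_2^{\buchi(B, k)}(v) = 1$ for every $k \in \Nat$ and every $v \in \V$. Combining this with the first inequality immediately gives
\[
\Th_2^{\buchi(B, k)}(v) \;=\; 1 - \Th_1^{\buchi(B, k)}(v) \;\geq\; 1 - \Th_1^{\buchi(B, k+1)}(v) \;=\; \Th_2^{\buchi(B, k+1)}(v).
\]
The bounds $0 \leq \Th_i^{\buchi(B,k)}(v) \leq 1$ are immediate from the definition of thresholds, since both the witnessing budgets and the resulting infima are restricted to the interval $[0,1]$.

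There is no genuine technical obstacle here; the lemma is essentially bookkeeping that combines monotonicity of the objective (on infinite paths) with the determinacy already in hand. The only point worth stressing is that the Player 2 inequality is not derived by a directly symmetric argument from Player 2's perspective: Player 2 wins precisely when the path lies \emph{outside} $\buchi(B,k)$, and the set inclusion reverses under complementation, so appealing to the complementarity of thresholds from Lem.~\ref{lem:buchi:soundness of time-varying threshold} is the cleanest route.
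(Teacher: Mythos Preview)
Your argument is correct and matches the paper's own justification, which is simply the one-line remark that \PO ``needs higher budget for forcing larger numbers of visits to $B$''---i.e., exactly the set inclusion $\buchi(B,k+1)\subseteq\buchi(B,k)$ you spell out. One small aside: your closing comment slightly overstates the need for complementarity, since the direct \PT argument also works (a strategy keeping the number of visits below $k$ a fortiori keeps it below $k+1$), but your route via Lem.~\ref{lem:buchi:soundness of time-varying threshold} is equally valid.
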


\subsection{Existence of Thresholds (for B\"uchi and Co-B\"uchi Objectives)}

We define two functions $\bThO$ and $\bThT$, which will be shown to coincide with the thresholds for the general (unbounded) B\"uchi and co-B\"uchi objectives, respectively.

\begin{definition}\label{def:buchi:threshold fixpoints}
	Define the functions $\bThO,\bThT\colon \V\to \mathbb{R}$ as follows. For every $v\in B$, define $\bThO(v)\coloneqq \lim_{k\to \infty} \bthO(v,k)$ and $\bThT(v) \coloneqq \lim_{k\to \infty} \bthT(v,k)$. For every $v\notin B$, define $\bThO(v) \coloneqq \ThO^{\frugalreach(B,\f)}(v)$ where $\f\colon b\mapsto \bThO(b)$ for every $b\in B$ and $\f\colon v\mapsto 0$ (can be arbitrary) for every $v\notin B$. 
	Likewise, for every $v\notin B$, define $\bThT(v) \coloneqq \ThT^{\frugalreach(B,\f)}(v)$ where $\f\colon b\mapsto 1-\bThT(b)$ for every $b\in B$ and $\f\colon v\mapsto 0$ (can be arbitrary) for every $v\notin B$. 
\end{definition}

Monotonicity (Lem.~\ref{lem:bounded buchi operator:monotonicity}) and boundedness of $g_1$ and $g_2$ imply the well-definedness of $g_1^*$ and $g_2^*$.
We now establish the existence and the characterization of thresholds.

\begin{restatable}{theorem}{thmBuchi}
For every $v \in V$, we have $\Th_1^{\buchi(B)}(v) = \bThO(v)$ and $\Th_2^{\buchi(B)}(v) = \bThT(v)$. Moreover, thresholds exist: $\Th_1^{\buchi(B)}(v) +\Th_2^{\buchi(B)}(v)=1$.
\end{restatable}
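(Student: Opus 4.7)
The plan is to mirror the proof of Theorem~\ref{thm:richman:reach:complementary threshold} and establish three claims which jointly yield the theorem: (a) $\bThO(v) + \bThT(v) = 1$ for every $v \in \V$; (b) Player~1 wins $\buchi(B)$ from $(v, B_1)$ whenever $B_1 > \bThO(v)$, giving $\ThO^{\buchi(B)} \leq \bThO$; and (c) Player~2 wins co-B\"uchi from $(v, B_2)$ whenever $B_2 > \bThT(v)$, giving $\ThT^{\buchi(B)} \leq \bThT$. Since two complementary winning strategies cannot coexist, $\ThO^{\buchi(B)} + \ThT^{\buchi(B)} \geq 1$ always, so (a)--(c) force $\ThO^{\buchi(B)} = \bThO$ and $\ThT^{\buchi(B)} = \bThT$.

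First I would verify well-definedness of $\bThO$ and $\bThT$ using Lemma~\ref{lem:bounded buchi operator:monotonicity} and boundedness in $[0,1]$, and then establish (a). For $v \in B$ I take $k \to \infty$ in the identity $\bthO(v,k) + \bthT(v,k) = 1$ from Lemma~\ref{lem:buchi:soundness of time-varying threshold}. For $v \notin B$, the $B$-case already gives $\bThO|_B = 1 - \bThT|_B$, so $\bThO(v)$ and $1 - \bThT(v)$ are respectively the Player-1 and Player-2 thresholds of the \emph{same} frugal-reachability instance $\frugalreach(B, \bThO|_B)$; Theorem~\ref{thm:frugalreach} then yields $\bThO(v) + \bThT(v) = 1$.

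For (b), I first use continuity of $\updO$ on $[0,1]^\V$ together with the monotone convergence $\bthO(\cdot,k) \uparrow \bThO$ to conclude that $\bThO = \updO(\bThO)$ on $B$. Player~1's memoryless winning strategy then combines two pre-existing ingredients: at $v \in B$ he uses the $\updO$-based bid from the proof of Theorem~\ref{thm:richman:reach:complementary threshold}, which preserves $B_1 > \bThO(\cdot)$ at the next vertex; at $v \notin B$ he uses the winning strategy for $\frugalreach(B, \bThO|_B)$ furnished by Theorem~\ref{thm:frugalreach}, which reaches some $b \in B$ in finite time with $B_1 > \bThO(b)$. The invariant $B_1 > \bThO(v)$ is preserved indefinitely, and the frugal-reachability guarantee forces a visit to $B$ within finitely many steps from every non-$B$ vertex, so $B$ is visited infinitely often.

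The main obstacle is (c). Dually, Player~2 plays $\updT$-bids at $v \in B$ and the frugal-safety strategy at $v \notin B$, preserving $B_2 > \bThT(v)$. However---as Example~\ref{ex:non-trivial-thresholds} underscores---in games with charging the thresholds inside a BSCC can be nonzero, so preserving this invariant does not obviously force $B$ to be visited only finitely often. To close this gap I would exploit the monotone convergence $\bthT(v,k) \downarrow \bThT(v)$: whenever $B_2 > \bThT(v) + \delta$ with $\delta > 0$, there exists a finite $k$ for which $B_2 > \bthT(v,k)$, and Lemma~\ref{lem:buchi:soundness of time-varying threshold} then supplies a strategy preventing a $k$-th visit to $B$. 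Composing the memoryless strategy above with these bounded-visit strategies, in the spirit of the construction for discrete-bidding games in \cite{AS22}, shows that the $B$-visit count is finite. Combining (a)--(c) finally gives $1 \leq \ThO^{\buchi(B)}(v) + \ThT^{\buchi(B)}(v) \leq \bThO(v) + \bThT(v) = 1$, whence $\ThO^{\buchi(B)} = \bThO$ and $\ThT^{\buchi(B)} = \bThT$.
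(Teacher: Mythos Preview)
Your proposal is correct and follows essentially the same approach as the paper: complementarity $\bThO+\bThT=1$ via Lemma~\ref{lem:buchi:soundness of time-varying threshold} and Theorem~\ref{thm:frugalreach}, Player~1's invariant-preserving strategy combining $\updO$-bids on $B$ with frugal reachability outside $B$, and---crucially---the reduction of Player~2's co-B\"uchi win to the bounded-visit guarantee of Lemma~\ref{lem:buchi:soundness of time-varying threshold}. The paper's proof of (c) is slightly more direct, skipping your memoryless-invariant detour and immediately invoking, at $v\in B$, the existence of $k$ with $B_2>\bthT(v,k)$ (and using frugal safety at $v\notin B$ to reduce to that case), but the substance is identical.
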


\begin{proof}
First, we show that $\bThO(v) \geq \Th_1^{\buchi(B)}(v)$.
Consider a configuration $\zug{v,\BO}$. When $\BO>g_1^*(v)$, \PO wins as follows. If $v\notin B$, he plays according to a winning strategy in a frugal-reachability game to guarantee reaching some $v' \in B$ with a budget that exceeds $g_1^*(v')$. For $v\in B$, he bids so that 
in the next configuration $\zug{v', B'_1}$, we have $B'_1 > g_1^*(v')$. 
Second, we show that $g_2^*(v)\geq \Th_2^{\buchi(B)}(v)$.
When $\BT = 1-B_1>g_2^*(v)$, \PT wins as follows. If $v \in B$, then there exists $k$ such that $\BT>g_2(v,k)$. Lem.~\ref{lem:buchi:soundness of time-varying threshold} shows that she can win the co-B\"uchi objective by preventing $B$ to be reached more than $k$ times. If $v \notin B$, she has a strategy to make the token either (i)~not reach $B$, or (ii)~reach $v' \in B$ with a budget at least $g_2^*(v')$. In both cases, she wins by repeating the strategy. 
Finally, by Lem.~\ref{lem:buchi:soundness of time-varying threshold}, we have 
$g_1(v, k) + g_2(v, k) = 1$, for all $k \in \Nat$. Thus, in the limit, we have $g_1^*(v) + g_2^*(v) = 1$, for $v \in B$. 
From this, the other sides of the above inequalities, i.e., $\bThO(v) \leq \Th_1^{\buchi(B)}(v)$ and $g_2^*(v)\leq \Th_2^{\buchi(B)}(v)$, and the existence claim follow in a straightforward manner.
\end{proof}

\subsection{Complexity Bounds (for B\"uchi and Co-B\"uchi Objectives)}

The computation of the thresholds $\ThO^{\buchi(B)} \equiv g_1^*$ and $\ThT^{\buchi(B)} \equiv g_2^*$ involves a nested fixed point computation.
For example, for $g_1^*$, the outer fixed point is the smallest fixed point of the sequence $g_1(\cdot,0),g_1(\cdot,1),\ldots$ for vertices in $B$, and for every $k=0,1,\ldots$, the inner fixed point is the usual greatest fixed point for frugal reachability thresholds required to reach $B$ with the leftover frugal budget $g_1(\cdot,k)$ from outside $B$.
The nested fixed point can be characterized as the solution of the following bilevel optimization problem.

\begin{align}\label{eq:optimization for richman buchi}
	\begin{split}
	&\qquad\min_{h\in \mathbb{R}^\V}\, \sum_{b\in B}h(b) \\
    &\text{subjected to constraints:}\\
    &\qquad h \in \arg\max_{h'\in\mathbb{R}^\V} \left\lbrace\sum_{v \in \V\setminus B} h'(v)\,\middle\vert\, \forall b\in B\;.\;h'(b)=h(b)\right\rbrace,\\
    &\qquad \forall v\in \V\;.\; h(v) = \updO\left(h(\cdot)\right)(v)\\
    &\qquad\qquad = \clamp{\frac{(1-\tau)h(v^-)+h(v^+)}{[h(v^+)-h(v^-)-1]\tau + 2}\cdot (1+\RO(v)+\RT(v)) - \RO(v)},\\
    &\qquad h(v^+) = \max_{u \in \N(v)} h(u), \quad
     h(v^-) = \min_{u \in \N(v)} h(u).
     \end{split}
\end{align}

\begin{restatable}{proposition}{PropRichmanBuchiSoundOpt} \label{prop:richman:buchi:sound opt}
	The solution of the optimization problem in \eqref{eq:optimization for richman buchi} is equivalent to the threshold function $\ThO^{\buchi(B)}$ of the B\"uchi player.
\end{restatable}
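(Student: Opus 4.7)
The plan is to show that $g_1^*$ (Def.~\ref{def:buchi:threshold fixpoints}) is both feasible and optimal for the bilevel program in~\eqref{eq:optimization for richman buchi}. I interpret the inner arg-max as ranging over fixed points $h' \in [0,1]^\V$ of $\updO$ that agree with $h$ on $B$—without this implicit restriction the supremum is trivially unbounded and the program is vacuous; note that any fixed point of $\updO$ automatically lies in $[0,1]^\V$ thanks to the clamp. Feasibility of $g_1^*$ is essentially built into its definition. On $\V\setminus B$, $g_1^* = \ThO^{\frugalreach(B, g_1^*|_B)}$, and by Thm.~\ref{thm:frugalreach} together with the greatest-fixed-point characterization analogous to Prop.~\ref{prop:richman:reach:fixpoint}, this is exactly the greatest fixed point of $\updO$ with the prescribed boundary values on $B$, so the inner arg-max is satisfied. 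On $B$, the equality $g_1^*(b) = \updO(g_1^*)(b)$ follows by letting $k \to \infty$ in the recursion $g_1(b, k) = \updO(g_1(\cdot, k-1))(b)$ and using continuity of $\updO$ on the compact set $[0,1]^\V$.

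For optimality, I would prove the pointwise inequality $h(b) \geq g_1^*(b)$ for every feasible $h$ and every $b \in B$, which yields the desired sum inequality. Proceeding by induction on $k$, the claim is that $h(b) \geq g_1(b, k)$ for all $k \in \mathbb{N}$. The base cases $k \in \{0, 1\}$ are immediate since $g_1(b, k) = 0$. For the inductive step, assuming $h(b) \geq g_1(b, k-1)$ on $B$, the inner arg-max forces $h$ on $\V\setminus B$ to equal the frugal reachability threshold with frugal budget $h|_B$; by monotonicity of the frugal reachability threshold in its frugal budget, this dominates $\ThO^{\frugalreach(B, g_1(\cdot, k-1))}(v) = g_1(v, k-1)$, so $h \geq g_1(\cdot, k-1)$ pointwise on $\V$. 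For $b \in B$ with $k \geq 2$, the outer fixed point equality combined with monotonicity of $\updO$ then gives $h(b) = \updO(h)(b) \geq \updO(g_1(\cdot, k-1))(b) = g_1(b, k)$. Passing to the limit yields $h(b) \geq g_1^*(b)$, and combined with feasibility one concludes that $g_1^*$ solves the program.

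The main obstacle I foresee is justifying the two monotonicity claims that drive the induction: monotonicity of $\updO$ on $[0,1]^\V$, and monotonicity of the frugal reachability threshold in its frugal budget. The first reduces to monotonicity of $P_\tau(x, y) = \frac{(1-\tau)x + y}{(y-x-1)\tau + 2}$, whose partial derivatives were shown to be non-negative in the proof of Lem.~\ref{lem:monotonic}, together with the observation that raising $f$ pointwise can only raise both $\max_u f(u)$ and $\min_u f(u)$ regardless of which successor attains these extrema; the clamp and the affine transformation by $\RO, \RT$ preserve monotonicity. The second monotonicity then follows from the Kleene-style argument behind Prop.~\ref{prop:richman:reach:fixpoint}: raising the boundary values on $B$ pointwise raises each iterate of $\updO$ starting from the top element, hence also the greatest fixed point, which is precisely the frugal reachability threshold. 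Combining feasibility and optimality with the identification $g_1^* = \ThO^{\buchi(B)}$ from the preceding theorem completes the proof.
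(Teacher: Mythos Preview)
Your approach is considerably more detailed than the paper's, which is essentially a two-sentence explanation that the bilevel program encodes ``least fixed point on $B$, greatest fixed point off $B$'' and leaves the identification with $g_1^*$ implicit. Your optimality argument---the inductive proof that every feasible $h$ satisfies $h(b)\geq g_1(b,k)$ for all $k$, hence $h|_B\geq g_1^*|_B$ pointwise---is correct and is a genuine addition over what the paper writes down. The two monotonicity ingredients you isolate (monotonicity of $\updO$, and monotonicity of the frugal-reachability threshold in the frugal budget) are exactly what is needed there, and both follow as you indicate.

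There is, however, a gap in your feasibility argument on $B$. You write that $g_1^*(b)=\updO(g_1^*)(b)$ follows by passing to the limit in $g_1(b,k)=\updO(g_1(\cdot,k-1))(b)$ using continuity of $\updO$. For this you need $g_1(\cdot,k)\to g_1^*$ on \emph{all} of $\V$, in particular on $\V\setminus B$. But for $v\notin B$, Def.~\ref{def:buchi:threshold fixpoints} sets $g_1^*(v)=\ThO^{\frugalreach(B,g_1^*|_B)}(v)$, i.e.\ the greatest fixed point with boundary $g_1^*|_B$, whereas $\lim_k g_1(v,k)$ is the limit of greatest fixed points with boundaries $g_1(\cdot,k)|_B\uparrow g_1^*|_B$. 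Equating the two is \emph{lower semicontinuity} of the greatest fixed point in its boundary data, not mere monotonicity; greatest fixed points are in general only upper semicontinuous in parameters, so this step needs its own justification. Concretely, what your limit argument actually yields is $g_1^*(b)=\updO(\tilde g)(b)$ where $\tilde g$ agrees with $g_1^*$ on $B$ but equals $\lim_k g_1(\cdot,k)$ on $\V\setminus B$, and monotonicity of $\updO$ then gives only the one-sided inequality $g_1^*(b)\leq \updO(g_1^*)(b)$. Closing the gap means showing either that the frugal-reachability threshold is continuous from below in the frugal budget, or directly that $\updO(g_1^*)(b)\leq g_1^*(b)$; neither follows from the monotonicity facts you have assembled.
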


\begin{proof}
	Suppose we fix the values of $h(b)$ for each $b \in B$. Then, due to Thm. \ref{thm:frugalreach}, the value of $h(\cdot)$ for other vertices is the greatest fixed point of $\updO$. So, our goal is to find the least values for $h(b)$ values such that $\updO$ has at least one fixed point. Equivalently, we are looking for the least fixed point of $\updO$ with respect to $h(b)$ values and then the greatest fixed point with respect to other $h(\cdot)$ values. This is characterized in the bilevel optimization problem.
	
\end{proof}

\begin{restatable}{theorem}{ThmBuchiTaxmanPSPACE} \label{thm:buchi:taxman:pspace}
	The following hold:
	\begin{enumerate}[(i)]
		\item $\THRESH^{\mathrm{taxman}}_{\buchi}\in \textup{2EXPTIME}$ and $\THRESH^{\mathrm{taxman}}_{\cobuchi}\in \textup{2EXPTIME}$, and \label{complexity:taxman:buchi}
		\item $\THRESH^{\mathrm{Richman}}_{\buchi}\in \Pi_2^\mathrm{P}$ and $\THRESH^{\mathrm{Richman}}_{\cobuchi}\in \Sigma_2^\mathrm{P}$\label{complexity:richman:buchi}
	\end{enumerate}
\end{restatable}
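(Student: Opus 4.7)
The plan is to encode the bilevel fixed-point characterization of Prop.~\ref{prop:richman:buchi:sound opt} into a decidable formula whose quantifier structure matches the claimed complexity classes.

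For part~(i) (taxman, 2EXPTIME), I would target the first-order theory of the reals. First I would replace the $\clamp{\cdot}$, $\arg\max$, and $\arg\min$ operators in the update $\updO$ of~\eqref{eq:optimization for richman buchi} by a quantifier-free disjunction over the three clamp branches and over the candidate identities of $v^+, v^-$ for every vertex~$v$, reusing the template $A^\G_T$ from the proof of Thm.~\ref{thm:reach:taxman:pspace}. The bilevel then unfolds as a sentence with two alternations of real-valued quantifier blocks: an outer $\exists h$ encoding the outer minimum over $h|_B$, and an inner $\forall h'$ certifying that $h|_{V \setminus B}$ is the greatest fixed-point extension of $h|_B$. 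Because sentences in the theory of the reals with a bounded number of quantifier alternations are decidable in doubly exponential time in the formula size, the B\"uchi membership in 2EXPTIME follows. The co-B\"uchi case is handled by applying the same construction to the dual B\"uchi threshold of Player~2 and invoking the threshold complementarity $\ThO^{\cobuchi}(v)+\ThT^{\cobuchi}(v)=1$.

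For part~(ii) (Richman, $\Pi_2^\mathrm{P}/\Sigma_2^\mathrm{P}$), the taxman update specialises to a piecewise-linear map, enabling an NP-certificate style argument rather than appealing to the theory of the reals. For each vertex~$u$ I would nondeterministically commit to (a)~which of the three clamp branches is active and (b)~the identity of its $u^+, u^-$ successors; these polynomial-size discrete commitments reduce the fixed-point constraints to a linear-arithmetic feasibility problem solvable in polynomial time, exactly as in the coNP certificate of Thm.~\ref{thm:reach:taxman:pspace}. Layering two such guess-and-check rounds on top of the bilevel places B\"uchi in $\Pi_2^\mathrm{P}$: a universal outer quantifier certifying the outer minimum over $h|_B$ and an existential inner quantifier certifying the inner greatest extension over $h|_{V \setminus B}$. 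The co-B\"uchi bound $\Sigma_2^\mathrm{P}$ then follows by swapping the roles of the players in the same bilevel.

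The hard part will be making the two alternations combine cleanly rather than multiplying, so that the overall complexity really is $\Pi_2^\mathrm{P}$ rather than, say, $\Pi_3^\mathrm{P}$. In particular, certifying the ``greatest extension'' at the inner level should use only a single universal quantifier over would-be larger fixed points, and certifying the ``minimum-sum'' outer solution similarly only a single quantifier. For this I intend to exploit the monotonicity of $\updO$ implicit in Lem.~\ref{lem:monotonic} and of the frugal-reachability threshold implicit in Thm.~\ref{thm:frugalreach}, together with the fact that the rational solutions to the guessed linear systems have polynomial bit-length (as in Thm.~\ref{thm:reach:taxman:pspace}).
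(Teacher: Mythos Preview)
Your approach to part~(i) is essentially the paper's: encode the bilevel fixed-point characterization as a sentence in the theory of the reals with an $\exists h\,\forall h'$ prefix over the template $A^\G_\varnothing$, and invoke quantifier elimination.

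For part~(ii), however, you diverge from the paper. The paper does not build the $\Pi_2^{\mathrm P}$ algorithm by hand. Instead it observes that the MILP encoding from Thm.~\ref{thm:reach:taxman:pspace} (the big-$M$ linearisation of the clamp branches and of the $v^+,v^-$ selectors) lifts \emph{verbatim} to the bilevel program~\eqref{eq:optimization for richman buchi}, yielding a bilevel MILP. It then simply cites Jeroslow's result that deciding bilevel MILP is in $\Sigma_2^{\mathrm P}$, so testing $\ThO^{\buchi(B)}(v)>0.5$ is in $\Sigma_2^{\mathrm P}$ and the complementary question $\ThO^{\buchi(B)}(v)\le 0.5$ is in $\Pi_2^{\mathrm P}$. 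The co-B\"uchi bound follows by duality.

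Your direct ``two layered guess-and-check'' route is plausible, but the very difficulty you flag---keeping the alternation depth at two rather than three---is exactly what the paper's black-box use of bilevel MILP complexity avoids. In particular, your description ``universal outer quantifier certifying the outer minimum / existential inner quantifier certifying the inner greatest extension'' does not yet correctly express $\ThO(v)\le 0.5$: certifying that a candidate $h$ is simultaneously minimal on $B$ and maximal on $V\setminus B$ among fixed points naively costs two separate universal tests, and you would still need an additional existential to pick the witness~$h$. Getting this down to a clean $\Pi_2^{\mathrm P}$ sentence requires extra argument (essentially reproving the bilevel MILP upper bound). The paper's route is therefore both shorter and safer; yours, if completed, would be more self-contained but demands the careful quantifier bookkeeping you already anticipate.
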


	The bounds in \eqref{complexity:taxman:buchi} follow from a reduction to an equivalent query in the theory of reals.
	For \eqref{complexity:richman:buchi}, we can check if the solution of the optimization problem in \eqref{eq:optimization for richman buchi} is larger than $0.5$, and if this is true then we conclude that $0.5 < \ThO^{\buchi(B)}(v)$ and can output a \emph{negative} answer.
	Since \eqref{eq:optimization for richman buchi} is a bilevel MILP, hence the check can be done in $\Sigma_2^\mathrm{P}$ \cite{jeroslow1985polynomial}, and the overall complexity is $\Pi_2^\mathrm{P}$. The other case is dual. 
	
\begin{proof}
	\textbf{Proof of (i).} 
	The decision problem  $\THRESH^{\mathrm{taxman}}_{\buchi}$ is equivalent to satisfiability of the following quantified formula:
	
	\[
	\begin{split}
		\exists & h \in \Real^V \textit{ s.t. } h(v) \geq 0.5 \wedge A^\G_\varnothing(h)\\ 
		\wedge& \bigg( \forall h' \in \Real^V: [\big(\exists b \in B \textit{ s.t. } h'(b)<h(b)\big) \Rightarrow \neg A^\G_\varnothing(h') ] \\
		&\wedge \big[\big((\forall b \in B: h'(b)=h(b)) \wedge (\exists v \in V \setminus B \textit{ s.t. } h'(v)>h(v))\big) \\
		& \Rightarrow \neg A^\G_\varnothing(h')\big]
		\bigg)
	\end{split}
	\]
	Where $A^\G_\varnothing$ is defined in equation (\ref{eq:definition:AGT}). Intuitively, $h$ must be a fixed point of $Av_1$ where vertices in $B$ have the lowest possible values (minimization, least fixed point), and vertices not in $B$ have the greatest values (maximization, greatest fixed point).
	
	Quantifier elimination is known to be in 2EXPTIME. Therefore, deciding whether $h(v)\geq 0.5$ can be done in 2EXPTIME. 
	
	\smallskip
	\noindent\textbf{Proof of (ii).} Consider equation (\ref{eq:optimization for richman buchi}) and apply the same conversion as in the proof of Thm. \ref{thm:reach:taxman:pspace} to obtain an instance of bilevel mixed-integer linear programming which is equivalent to finding the thresholds.
\end{proof}


\section{Lower Complexity Bounds}
\label{sec:lower bounds}
In this section we show how to simulate a turn-based game using a Richman-bidding game with charging. Thus, solving Richman-bidding games with charging is at least as hard as their turn-based counterparts. Specifically, we obtain that solving Rabin bidding games with charging is NP-hard. This is a distinction from traditional Richman-bidding games, where solving Rabin games is in NP and coNP. 
Since taxman-bidding games generalize Richman-bidding games, hence it follows that Rabin taxman-bidding games are also NP-hard.

\paragraph*{Turn-based games}
A turn-based arena is a tuple $\tup{V,V_1,V_2,E}$, where
$V$ is a finite set of vertices,
$V_1,V_2$ form a partition of $V$, i.e., $V_1\cup V_2=V$ and $V_1\cap V_2=\emptyset$, and
$E\subseteq V\times V$ is a set of directed edges.
A turn-based $2$-player game is played on a turn-based arena for a given objective $\spec\subseteq V^\omega$.
Similar to bidding games, the game is played by \PO and \PT, moving a token along the edges of the arena.
But unlike bidding games, who moves the token is decided based on whether it is placed on a vertex in $V_1$ or $V_2$, respectively.
A history is a finite sequence $v^0v^1\ldots v^n\in V^*$, and strategies of \PO and \PT are functions $\rho_1\colon V^*V_1\to V$ and $\rho_2\colon V^*V_2\to V$, with $\rho_i(hv)\in \N(v)$ for $i\in\set{1,2}$.
A play starting at an initial vertex $v$ is an infinite sequence of vertices, starting at $v$, and is obtained by applying any given pair of strategies, written as $\play(v,\rho_1,\rho_2)$.
\PO can win the game from a given vertex $v$, if he has a strategy $\rho_1$ such that for every \PT strategy $\rho_2$, we have $\play(v,\rho_1,\rho_2)\in \spec$.

\begin{restatable}{lemma}{LemRedTurnBased}\label{lem:reduction from turn-based game}
	Given a turn-based game $\G$, an initial vertex $v$, and an objective $\spec$, there is a bidding game with charging $\G'$ of size linear in $\G$, with the same objective and initial vertex such that \PO can win $\G$ from $v$ if and only if $\tup{\G,v,\spec}\in \THRESH_S^{\mathrm{Richman}}$.
\end{restatable}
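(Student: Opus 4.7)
The plan is to construct $\G'$ on the same vertex set and edges as $\G$, adding only charging functions that force, at each vertex $v\in V_i$, \PLi\ to have a dominating budget after charging. Concretely, fix a constant $C\geq 2$ and define $\RO(v)=C$ if $v\in V_1$ (and $\RO(v)=0$ otherwise), and $\RT(v)=C$ if $v\in V_2$ (and $\RT(v)=0$ otherwise). The resulting arena has size linear in $\G$, and the initial vertex and objective are unchanged. Below we abbreviate $\ThO^{\G',\mathrm{Richman},\spec}$ by $\ThO^\spec$, and similarly $\ThT^\spec$.

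The central invariant is: at any configuration $\tup{v,B_1}$ with $v\in V_i$, after charging \PLi's budget becomes
\[
B_i''=\frac{B_i+C}{1+C}\geq \frac{C}{1+C}\geq \frac{2}{3},
\]
regardless of the pre-charging budget $B_i\in [0,1]$. Hence, at every $V_i$ vertex, \PLi\ can always win the bidding by bidding $\tfrac{1}{2}$, because \PLni's post-charging budget is at most $\tfrac{1}{3}$.

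\emph{Forward direction} (\PO wins $\G$ from $v$ $\Rightarrow \ThO^\spec(v)\leq 0.5$). Given a winning turn-based \PO strategy $\rho_1$, define a \PO bidding strategy $\stratO'$ in $\G'$ as follows. At a $V_1$ vertex with history $h$, bid $\tfrac{1}{2}$ and, upon winning, move to the successor prescribed by $\rho_1$ applied to the vertex projection of $h$. At a $V_2$ vertex, bid $0$ and, in the rare event of winning by tie-breaking, move to an arbitrary fixed successor. By the invariant, \PO wins every bidding at $V_1$ vertices, so his moves always follow $\rho_1$. For any \PT bidding strategy $\stratT'$ and any initial budget, the vertex sequence produced by $\play(\tup{v,B_1},\stratO',\stratT')$ equals $\play(v,\rho_1,\rho_2)$, where $\rho_2$ is the turn-based strategy that copies the observed choices at $V_2$ vertices; this sequence lies in $\spec$ since $\rho_1$ is winning. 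Hence $\stratO'$ wins $\G'$ from every initial budget at $v$, so in particular $\ThO^\spec(v)=0\leq 0.5$.

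\emph{Backward direction} ($\ThO^\spec(v)\leq 0.5\Rightarrow$ \PO wins $\G$). We prove the contrapositive using determinacy of the turn-based game (which holds for all Borel objectives and in particular for those in $S$). Suppose \PO does not win $\G$; then \PT wins $\G$. The symmetric forward construction, with the roles of the two players swapped, shows that \PT wins $\G'$ from every initial budget, so $\ThT^\spec(v)=0$. Since \PO and \PT cannot both win from the same configuration, this forces $\ThO^\spec(v)=1>0.5$, contradicting the hypothesis. The main technical ingredient is verifying the invariant and translating cleanly between bidding-game histories (which include budgets) and turn-based histories (which do not); once this translation is settled, the rest of the argument is bookkeeping.
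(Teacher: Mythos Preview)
Your proof is correct and actually takes a cleaner route than the paper. Both constructions use the same charging idea---set $R_i(v)=2$ at each $V_i$ vertex so that, after normalization, the ``owner'' has post-charge budget at least $\tfrac{2}{3}$ and can always outbid the opponent. The paper, however, additionally adjoins two sink vertices $s_1,s_2$ (with edges $V_i\to s_i$) and modifies the objective to $\varphi'=(\varphi\cap\safe(\{s_1\}))\cup\reach(\{s_2\})$, so that Player~$i$ is \emph{forced} to win the bid at $V_i$ vertices on pain of an immediate loss. You dispense with these gadgets entirely and keep the arena and objective unchanged, which in fact matches the lemma statement (``same objective and initial vertex'') more literally than the paper's own construction does.

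What your approach buys is simplicity and a size-preserving reduction that needs no re-encoding of the objective in the new arena. What the paper's sink gadget buys is arguably a more ``local'' argument: the sinks make the must-win-the-bid constraint syntactic rather than strategic. In the end both directions of both proofs hinge on the same ingredients---the charging invariant for the forward direction, and Borel determinacy of the turn-based game for the contrapositive of the backward direction---so neither approach avoids the appeal to determinacy. Your handling of the tie-breaking corner case (defining $\rho_2$ post hoc from the observed successor choices) is the right move and closes the only potential gap.
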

\begin{proof}
	The reduction is described below.
Suppose $\G = \tup{V,V_1,V_2,E}$ is the turn-based arena.
Define $\G' = \tup{V',E',\RO,\RT}$ such that
$\V' \coloneqq \V\cup \set{s_1,s_2}$ with $s_1,s_2$ being two new sink vertices,
$\E' \coloneqq \E \cup V_1\times \set{s_1} \cup V_2\times \set{s_2}\cup  \set{(s_1,s_1),(s_2,s_2)}$,
for every $v\in V_1$, $\RO(v) \coloneqq 2$ and $\RT(v)\coloneqq 0$, 
for every $v\in V_2$, $\RO(v) \coloneqq 0$ and $\RT(v) \coloneqq 2$, and
$\RO(s_1) \coloneqq \RO(s_2) \coloneqq \RT(s_1) \coloneqq\RT(s_2)\coloneqq 0$.
The initial vertex $v'\coloneqq v$.
Finally, $\spec' \coloneqq \set{\xi = v^0v^1\ldots \in {\V'}^\omega \mid \xi\in \spec \cap \safe(\set{s_1})} \cup \reach(\set{s_2})$.

Intuitively, $\G'$ contains the same set of vertices as $\G$ with two additional sink vertices $s_1$ and $s_2$, where $s_i$ is losing for \PLi, for $i \in \set{1,2}$. For every vertex $v$, if $v$ is controlled by \PO in $\G$, then in $\G'$, we define \PO's charge to be $\RO(v) = 2$. Moreover, we add an edge from $v$ to $s_1$, requiring \PO to win the bidding in $v$. Note that even if \PO starts with a budget of $\epsilon > 0$, at $v$, after charging and normalization, his budget exceeds $2/3$. \PT's vertices are dual. It is not hard to verify that \PO can win in $\G$ from $v$ if and only if $\Th^{\spec}(v)=0$, and \PT can win in $\G$ from $v$ if and only if $\Th^{\spec}(v)=1$. 
\end{proof}

Since turn-based Rabin games are NP-hard, we obtain the following.

\begin{theorem}\label{thm:rabin and streett lower bounds}
We have $\THRESH^{\mathrm{Richman}}_{\rabin}\in \textup{NP-hard}$ and $\THRESH^{\mathrm{Richman}}_{\streett}\in \textup{coNP-hard}$. 
\end{theorem}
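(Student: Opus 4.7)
The plan is to invoke Lem.~\ref{lem:reduction from turn-based game} together with the classical result that deciding the winner in turn-based games is NP-hard for Rabin objectives and coNP-hard for Streett objectives (Emerson and Jutla). Given a turn-based Rabin (resp.\ Streett) game $\G$ with initial vertex $v$, the lemma produces a Richman-bidding game with charging $\G'$ on the vertex set $V\cup\set{s_1,s_2}$ such that \PO wins $\G$ from $v$ iff the threshold in $\G'$ at $v$ does not exceed $0.5$. Since the lemma is stated for an arbitrary objective $\spec$, the only remaining task is to exhibit an actual Rabin (resp.\ Streett) condition on $\G'$ that realizes $\spec' = (\spec \cap \safe(\set{s_1})) \cup \reach(\set{s_2})$.

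For the Rabin case, I would start from the Rabin pairs $(F_i,B_i)$ of $\spec$, replace each pair by $(F_i\cup\set{s_1},B_i)$, and add one fresh pair $(\emptyset,\set{s_2})$. Because $s_1$ and $s_2$ are sinks, ``visiting $s_1$ finitely often'' is equivalent to never visiting $s_1$, and ``visiting $s_2$ infinitely often'' is equivalent to reaching $s_2$; the disjunction over the modified pairs then coincides with $\spec'$.

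For the Streett case, I would keep the original pairs $(B_i,A_i)$ unchanged and add a single new pair $(\set{s_1},\emptyset)$, which enforces that $s_1$ is never visited. Since $s_2$ is a sink disjoint from every $B_i$, any play that reaches $s_2$ visits each $B_i$ only finitely often and therefore satisfies every original pair vacuously; any play that avoids both sinks is a play of $\G$, and the original pairs determine the outcome. Thus the augmented Streett condition is exactly $\spec'$, and chaining with Lem.~\ref{lem:reduction from turn-based game} and the known hardness of turn-based Rabin/Streett games yields the theorem. The main obstacle is the Streett case, because Streett conditions are not in general closed under union and a naive encoding of the disjunction with $\reach(\set{s_2})$ is not Streett; the key trick is to exploit the fact that $s_2$ is a sink disjoint from every $B_i$, which makes each original pair collapse to ``true'' as soon as $s_2$ is entered and thereby absorbs the $\reach(\set{s_2})$ disjunct into the existing Streett pairs.
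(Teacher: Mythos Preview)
Your proposal is correct and follows the same approach as the paper, which simply invokes Lem.~\ref{lem:reduction from turn-based game} together with the NP-hardness of turn-based Rabin games (the paper's proof is literally the one sentence ``Since turn-based Rabin games are NP-hard, we obtain the following''). You have additionally spelled out how the modified objective $\spec'$ is realized concretely as a Rabin (resp.\ Streett) condition on $\G'$, including the observation that the sink structure makes the $\reach(\set{s_2})$ disjunct absorbable into a Streett conjunction; this is a detail the paper leaves implicit.
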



\section{Repairing Bidding Games} \label{sec:repair}

\newcommand{\repaired}{\mathit{Repaired}}
\newcommand{\REPAIR}{\mathrm{R\_THRESH}}

\begin{wrapfigure}{r}{0.65\textwidth}
	\tikzset{every state/.style={minimum size=0pt}}
	\begin{tikzpicture}[node distance=0.5cm]
		\node[state,initial,label={right:$\begin{bmatrix}0\\ 2\end{bmatrix}$}]	(a)	at	(0,0)		{$a$};
		\node[state]			(b)	[above left=of a]		{$b$};
		\node[state]			(c)	[left=of b]			{$c$};
		\node[state]			(d)	[above right=of a]		{$d$};
		\node[state]			(e)	[right=of d]			{$e$};
		\node[state]			(f)	[above=of b]			{$f$};
		\node[state,accepting]			(g)	[above=of d]			{$g$};
		
		\path[->]	(a)		edge	(b)
		edge	(d)
		(b)		edge[bend left]		(c)
		edge	(f)
		edge	(g)
		(c)		edge[bend left]	(b)
		(d)		edge[bend left]		(e)
		edge	(f)
		edge	(g)
		(e)		edge[bend left]		(d);
	\end{tikzpicture}
	\begin{tabular}[b]{c c}
		$\RO'$	&	$\ThO(a)$\\
		\hline
		$-$		&	$1$\\
		$c,e\mapsto 1$	&	$1$\\
		$b\mapsto 2$	&	$0.75$\\
		$d\mapsto 2$	&	$0.75$\\
		$a,\ldots,e\mapsto 0.4$	&	$0.62$\\
		$a\mapsto 2$	&	$0.5$\\
		$b,d\mapsto 1$		&	$0$
	\end{tabular}
	\caption{Illustrating the repair problem.
		LEFT: A reachability game with the objective $\reach(\set{g})$.
		RIGHT: With no repair (first row), $\ThO(a) = 1$. We depict repairs (first col.) and the changes they imply to the thresholds (second col.), for a repair budget of $C = 2$.
	}
	\label{fig:reinforcement}
	\vspace{-0.5cm}
\end{wrapfigure}

In this section, we introduce the {\em repair} problem for bidding games. Intuitively, the goal is to add minimal charges to the vertices of an arena so as to decrease the threshold in the initial vertex to a target threshold. Formally, we define the following problem.

\begin{definition}[Repairing bidding games]
Consider an arena $\G$ with a vertex $v$, a bidding mechanism $M\in \set{\mathit{Richman},\mathit{poorman},$ $\mathit{taxman}}$, a class of objectives $S\in \set{\reach,\safe,\buchi,\cobuchi}$, and a repair budget $C\in \mathbb{R}_{\geq 0}$. The set of \emph{repaired arenas}, denoted $\repaired(\G,C)$, are arenas obtained from $\G$ by adding \PO charges whose sum does not exceed $C$.
Formally, $\repaired(\G,C) \coloneqq \set{\tup{\V,\E,\RO',\RT} \text{ is an arena}  \mid \forall v\in\V\;.\; \RO'(v)\geq \RO(v) \land \sum_{v\in\V} (\RO'(v) - \RO(v)) \leq C}$. The problem $\REPAIR^M_S$ takes as input $\zug{\G, v, \varphi, C}$, where $\varphi \in S$, and accepts iff there exists $\G'\in\repaired(\G,C)$ with $\tup{\G',v,\varphi}\in \THRESH^M_S$.
\end{definition}

\begin{example}
We illustrate the non-triviality of the repair problem in Fig.~\ref{fig:reinforcement}.
Observe that neither assigning charges uniformly nor assigning charges to a single vertex, decrease the threshold sufficiently, whereas adding a charge of $1$ to both $b$ and $d$ is a successful repair. \qed
\end{example}

\begin{theorem} 
\label{thm:repair:Richman}
	The following hold:
	\begin{enumerate}[(i)]
		\item $\REPAIR^{\mathrm{Richman}}_{\reach}\in \textup{2EXPTIME}$, \label{thm:repair:Richman:reachability}
		\item $\REPAIR^{\mathrm{Richman}}_{\safe}\in \textup{PSPACE}$. \label{thm:repair:Richman:safety}
	\end{enumerate}
\end{theorem}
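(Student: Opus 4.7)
The plan is to reduce both repair problems to decision problems in the first-order theory of real closed fields and invoke standard complexity bounds, in the same spirit as the encodings used in Theorems~\ref{thm:reach:taxman:pspace} and~\ref{thm:buchi:taxman:pspace}. The repair budget constraint and monotonicity of the charges translate directly into linear inequalities over fresh real variables $\RO'(v)$, one per vertex.

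For the safety case~(ii), I would use that Richman thresholds are complementary, so $\ThO^{\safe(S)}(v_0)\leq 0.5$ is equivalent to $\ThT^{\reach(\compl{S})}(v_0)\geq 0.5$. By Proposition~\ref{prop:richman:reach:fixpoint} (applied to \PT), $\ThT^{\reach(\compl{S})}$ is the greatest fixed point of $\updT$ subject to boundary values on $\compl{S}$, and in particular the greatest fixed point exceeds $0.5$ at $v_0$ iff \emph{some} fixed point $h$ does, since any fixed point is pointwise dominated by the greatest one. Thus, asking whether the repair exists reduces to: is there a tuple $\RO'\in\Real_{\geq 0}^{\V}$ and an $h\in[0,1]^{\V}$ such that $\RO'(v)\geq \RO(v)$, $\sum_v(\RO'(v)-\RO(v))\leq C$, $h$ is a fixed point of $\updT$ built from $\RO'$ with the appropriate boundary conditions, and $h(v_0)\geq 0.5$. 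The fixed-point condition, unfolded over the $\clamp{\cdot}$ saturation and over the $\max/\min$ across successors (exactly as in the proof of Theorem~\ref{thm:reach:taxman:pspace}(\ref{complexity:richman:reach})), becomes a Boolean combination of polynomial equalities and inequalities. The full sentence is then in the existential theory of reals, which by Canny's theorem is decidable in $\textup{PSPACE}$.

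For the reachability case~(i), $\ThO^{\reach(T)}$ is the greatest fixed point of $\updO$, and the condition $\ThO^{\reach(T)}(v_0)\leq 0.5$ is equivalent to asking that \emph{every} fixed point $h$ of $\updO$ satisfies $h(v_0)\leq 0.5$. The repair problem becomes: does there exist $\RO'$ meeting the budget constraints such that for every $h\in[0,1]^{\V}$, if $h$ is a fixed point of $\updO$ built from $\RO'$ with boundary conditions on $T$, then $h(v_0)\leq 0.5$? After the same disjunctive unfolding of $\updO$, this is a theory-of-reals sentence of the form $\exists\vec{x}\,\forall\vec{y}\,\phi(\vec{x},\vec{y})$ with $\phi$ quantifier-free. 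By Renegar's quantifier-elimination algorithm, sentences with one quantifier alternation are decidable in doubly exponential time, yielding the $\textup{2EXPTIME}$ upper bound, and this mirrors the argument used in Theorem~\ref{thm:buchi:taxman:pspace}(\ref{complexity:taxman:buchi}) for taxman B\"uchi games.

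I expect the main obstacle to be the reach case: the universal quantifier must range over \emph{genuine} fixed points of $\updO$, not merely over vectors that happen to satisfy one disjunct of the unfolded formula. In particular, the disjunctive encoding of $\clamp{\cdot}$ (three cases: saturation at $0$, saturation at $1$, or interior equality) and of $\max/\min$ over tied successors must together characterize fixed-point-ness exactly, so that no spurious $h$ can vacuously falsify the $\forall$ and make the formula underapprox the true set of valid repairs. The remaining parts---setting up the variables $\RO'(v)$, writing the budget constraints, and citing the theory-of-reals complexity results---are routine given the encodings already established earlier in the paper.
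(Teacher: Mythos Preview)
Your proposal is correct and follows essentially the same approach as the paper: the safety case is encoded as a purely existential sentence in the theory of reals (exploiting that the safety player's threshold is an extremal fixed point, so the inequality is witnessed by \emph{some} fixed point), giving PSPACE, and the reachability case is encoded as an $\exists\forall$ sentence (since the reachability threshold is a greatest fixed point, bounding it requires ruling out \emph{all} fixed points), giving 2EXPTIME via quantifier elimination. The only cosmetic difference is that for~(ii) you pass through complementarity to \PT's greatest fixed point of $\updT$, whereas the paper stays with \PO's threshold directly as a least fixed point of $\updO$; these are dual formulations of the same argument, and your flagged concern about exactness of the disjunctive encoding is already handled by the formula $A^\G_T$ from the proof of Theorem~\ref{thm:reach:taxman:pspace}.
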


\begin{proof} 
	We introduce notation for the proof.
	Let $G = \tup{\V,\E,\RO,\RT}$ be a bidding game and $U$ be a set of vertices. 
	Define $A^G_U\colon [0,1]^V\to \set{0,1}$ such that for every $h\in [0,1]^V$, $A^G_U(h)=1$ iff $h(v) = \updO(h)(v)$ for every $v\notin U$.
	Observe that $\Th_1^{\reach(T)}$ is the largest $h$ for which $A^G_T(h)=1$ and moreover $h(v) = 0$ for every $v\in T$.

	\noindent\textbf{Proof of \eqref{thm:repair:Richman:reachability}:}
	Consider a bidding game $G = \tup{\V,\E,\RO,\RT}$ where the objective of \PO is $\reach(T)$ for some $T \subseteq V$. The goal is to check if it is possible to increase $\RO$ by a total of $C$ such that the reachability threshold at $a \in \V$ falls below $0.5$. This is equivalent to: 
	\begin{multline*}
	\exists \RO' \in \Real^\V \;.\; 
	\left(\RO' \geq \RO\right) 
	\wedge 
	\left(|\RO'-\RO|_1 \leq C\right) 
	\wedge \\
	\left(\forall \ThO \in \Real^\V\;.\; \Big[\big(\forall v \in T\;.\; \ThO(v)=0 \big) \wedge A^{\G'}_T(\ThO)\Big] \Rightarrow \ThO(a) \leq 0.5 \right)
	\end{multline*}
	where $\G'=\tup{\V,\E,\RO',\RT}$. 
	The validity of the above formula can be checked by applying a quantifier elimination method. Therefore, the decision problem is in $\textup{2EXPTIME}$.
	
	\noindent\textbf{Proof of \eqref{thm:repair:Richman:safety}:}
	Consider a bidding game $G = \tup{\V,\E,\RO,\RT}$ where the objective of \PO is $\safe(T)$ for some $T \subseteq V$. The goal is to check if it is possible to increase $\RO$ by a total of $C$ such that the safety threshold at $a \in \V$ falls below $0.5$. This is equivalent to:
	\begin{multline*}
	\exists \RO' \in \Real^\V \;.\; 
	\left(\RO' \geq \RO\right)
	 \wedge 
	 \left(|\RO'-\RO|_1 \leq C \right)
\wedge \\
	\left(\exists \ThO \in \Real^\V \textit{ s.t. } \big[\ThO(a) \leq 0.5 \wedge A^{\G'}_T(\ThO) \wedge \forall u \in T, \ThO(u)=1 \big]\right)
	\end{multline*}
	where $\G'=\tup{\V,\E,\RO',\RT}$.
	The above formula can be seen as an input instance of existential theory of reals which is known to be in $\textup{PSPACE}$.
\end{proof}

\section{Conclusion and Future Work}
We introduce and study a generalization of bidding games in which players' budgets are charged throughout the game.
One advantage of the model over traditional bidding games is that long-run safety is not trivial. We show that the model maintains the key favorable property of traditional bidding games, namely the existence of thresholds, 
the proof of which is, however, significantly more challenging due to the non-uniqueness of thresholds.
We characterize thresholds in terms of greatest and least fixed points of certain monotonic operators. 
Finally, we establish the first complexity lower bounds in continuous-bidding games and study, for the first time, a repair problem in this model. 

There are plenty of open questions and directions for future research. 
First, it is important to extend the results to richer classes of $\omega$-regular objectives, like parity, Rabin, and Streett, as well as to quantitative objectives, like mean-payoff. 
Second, tightening the complexity bounds is an important open question. For example, it might be the case that finding thresholds in Richman-bidding games with charging is in NP and coNP. 
Third, traditional reachability Richman-bidding games are equivalent to a class of stochastic games~\cite{Con92} called {\em random-turn games}~\cite{PSSW09}, and the equivalence is general and intricate in infinite-duration games~\cite{AHC19,AHI18,AHZ21,AJZ21}. It is unknown if such a connection exists for games with charging, and if it does, then many of the open questions may be solved via available tools for stochastic games.
Finally, there are various possible extensions, like charges disappearing after a vertex is visited, charges that are collectible in multiple installments, etc.

\bibstyle{plainurl}
\bibliography{references,ga}

\end{document}